\documentclass[10pt,conference,twocolumn]{ieeeconf}

\usepackage[utf8]{inputenc}
\usepackage[T1]{fontenc}
\usepackage{amsmath,amssymb}
\let\ieeeproof\proof
\let\endieeeproof\endproof
\let\proof\relax
\let\endproof\relax
\usepackage{amsthm}
\let\proof\ieeeproof
\let\endproof\endieeeproof
\usepackage{mathtools}
\usepackage{bm}
\usepackage{hyperref}
\usepackage{algorithm}
\usepackage{algorithmic}
\usepackage{graphicx}
\usepackage{color}
\usepackage{cite}
\let\ieeelabelindent\labelindent
\let\labelindent\relax
\usepackage{enumitem}
\let\labelindent\ieeelabelindent
\usepackage{booktabs}
\usepackage{thmtools}

\theoremstyle{plain}
\newtheorem{theorem}{Theorem}
\newtheorem{lemma}[theorem]{Lemma}

\theoremstyle{definition}
\newtheorem{definition}{Definition}
\newtheorem{assumption}{Assumption}

\newtheorem{remark}{Remark}

\newcommand{\R}{\mathbb{R}}

\newcommand{\calH}{\mathcal{H}}

\newcommand{\calU}{\mathcal{U}}

\newcommand{\calX}{\mathcal{X}}

\newcommand{\norm}[1]{\left\| #1 \right\|}

\DeclareMathOperator{\col}{col}

\title{Stability Guarantees for Data-Driven Predictive Control of Nonlinear Systems via Approximate Koopman Embeddings}

\IEEEaftertitletext{\vspace{-1\baselineskip}}

\author{\authorblockN{Amin Taghieh and SangWoo Park}
\thanks{The authors are with the New Jersey Institute of Technology, Newark, NJ 07102 USA. (e-mails: at2228@njit.edu, sangwoo.park@njit.edu)}}

\begin{document}
\maketitle

\begin{abstract}
Data-driven model predictive control based on Willems' fundamental lemma has proven effective for linear systems, but extending stability guarantees to nonlinear systems remains an open challenge. In this paper, we establish conditions under which data-driven MPC, applied directly to input-output data from a nonlinear system, yields practical exponential stability. The key insight is that the existence of an approximate Koopman linear embedding certifies that the nonlinear data can be interpreted as noisy data from a linear time-invariant system, enabling the application of existing robust stability theories. Crucially, the Koopman embedding serves only as a theoretical certificate; the controller itself operates on raw nonlinear data without knowledge of the lifting functions. We further show that the proportional structure of the embedding residual can be exploited to obtain an ultimate bound that depends only on the irreducible offset, rather than the worst-case embedding error. The framework is demonstrated on a synchronous generator connected to an infinite bus, for which we construct an explicit physics-informed embedding with error bounds.
\end{abstract}

\vspace{-2mm}

\section{Introduction}\label{sec:intro}
\vspace{-1mm}
\subsection{Motivation and Background}
\vspace{-1mm}
Data-driven control methods that bypass explicit system identification have become a central topic in modern control theory, building on Willems' fundamental lemma~\cite{willems2005}. For linear time-invariant (LTI) systems, this lemma provides a complete, non-parametric characterization of the system's trajectory space using a single persistently exciting input-output trajectory. When combined with model predictive control (MPC), this yields the Data-Enabled Predictive Control (DeePC) algorithm \cite{coulson2019} and its variants with rigorous closed-loop stability and robustness guarantees \cite{berberich2021}.
 
Beyond MPC, Willems' fundamental lemma has led to a broad range of data-driven controller designs for LTI systems. These include computing stabilizing feedback controllers and optimal LQR gains directly from data~\cite{depersis2020}, characterizing which control-theoretic properties can be inferred from a given dataset through a ``data informativity'' framework \cite{vanwaarde2020}, and unifying direct data-driven and indirect identification-based formulations via regularization \cite{dorfler2023}. These developments provide a mature theoretical foundation for LTI systems, but all rely fundamentally on the linearity assumption: the data must come from a linear system, possibly corrupted by noise.
 
Several approaches have been proposed to extend data-driven methods to specific classes of nonlinear systems, including LTI embeddings for bilinear systems \cite{markovsky2023}, sum-of-squares programs for polynomial systems \cite{guo2022}, coordinate transformations for feedback-linearizable systems \cite{alsalti2023}, and linearization-based tracking MPC \cite{berberich2022}.

A fundamentally different perspective is offered by Koopman operator theory \cite{koopman1931}, which provides a pathway toward applying linear methods to broad classes of nonlinear systems by lifting the dynamics into a higher-dimensional (potentially infinite) functional space where they evolve linearly. This Koopman-based idea has been pursued along two distinct lines. The first is \emph{identification-based}: one learns a finite-dimensional Koopman model from data via extended dynamic mode decomposition (EDMD) \cite{korda2018} and then applies standard MPC using the identified model. Recent work has established stability guarantees for this approach under proportional approximation error bounds \cite{bold2025,schimperna2025,dejong2024}. The second line is \emph{directly data-driven}: the extended Willems' fundamental lemma of \cite{shang2024} shows that the trajectory space of a nonlinear system admitting an exact Koopman embedding can be represented directly from nonlinear data without knowledge of the lifting functions. 
 
However, a critical subtlety is that \emph{exact} finite-dimensional Koopman embeddings rarely exist for physical systems. For example, systems with trigonometric nonlinearities or bilinear terms generically require infinite-dimensional embeddings to achieve exact closure of the observables. Therefore, a fundamental question remains: under what conditions is the resulting closed loop stable, and how does the stability margin depend on the system's deviation from linearity? 
This makes the \emph{approximate} embedding framework not merely a theoretical convenience, but a practical necessity. 

In this paper, we establish rigorous conditions under which data-driven MPC, applied directly to input-output data from a nonlinear system, yields practical exponential stability of the closed loop. The key insight is that the existence of an approximate Koopman linear embedding certifies that the nonlinear data can be interpreted as noisy data from a linear time-invariant system, enabling the application of existing robust stability theories. Our contributions are as follows: (i) We formalize the notion of an \emph{approximate Koopman linear embedding with proportional-with-offset error bounds} and construct an explicit physics-informed embedding for the third-order synchronous generator; (ii) We prove that the robust data-driven MPC scheme of~\cite{berberich2021}, applied directly to input-output data from a nonlinear system, yields practical exponential stability whenever the system admits an approximate Koopman embedding with sufficiently small error; 
(iii) We provide a rigorous bridge between the Koopman approximation error and the bounded noise framework of \cite{berberich2021}, showing how the nonlinearity gap maps to an effective noise level $\bar\epsilon$ that determines the stability margin.


\vspace{-2mm}
\subsection{Notation}
For a vector $x$ and positive definite matrix $P$, a weighted norm is defined as $\norm{x}_P := \sqrt{x^\top P x}$. We write $\col(\cdot)$ for column stacking and $x_{[a,b]} := \col(x_a, \ldots, x_b)$. For a compact operating region $\calX$ with equilibrium $x_s$ and input constraint set $\calU$ with equilibrium input $u_s$, we define $\mathrm{diam}_z := \max_{x \in \calX}\norm{\Phi(x) - \Phi(x_s)}_2$ and $\mathrm{diam}_u := \max_{u \in \calU}\norm{u - u_s}_2$. Finally, a sequence $\{x_k\}_{k=0}^{N-1}$ induces the Hankel matrix
\[
\calH_L(x) := \begin{bmatrix} x_0 & x_1 & \cdots & x_{N-L} \\ x_1 & x_2 & \cdots & x_{N-L+1} \\ \vdots & & \ddots & \vdots \\ x_{L-1} & x_L & \cdots & x_{N-1}\end{bmatrix}.
\]

\vspace{-3mm}
\section{Synchronous Generator Model}\label{sec:model}
\vspace{-2mm}
In this section, we present a third-order synchronous generator model, which is nonlinear and fails to admit an exact finite-dimensional Koopman embedding. Although the theoretical results of this paper are generally applicable, we will use the generator model as a guiding example.

\vspace{-3mm}
\subsection{Continuous-Time Dynamics}
\vspace{-1mm}
In this paper, we consider the dynamics of a single synchronous generator connected to an infinite bus through a transmission line. The generator is modeled by the classical flux-decay model without exciter or governor dynamics. The state is denoted by $x = (\delta, \omega, E'_q)^\top \in \R^3$, where $\delta$ is the rotor angle, $\omega$ is the rotor angular velocity, and $E'_q$ is the $q$-axis transient internal voltage. The control input is $u = T_M \in \R$ (mechanical torque). 

\vspace{-3mm}
\begin{subequations}\label{eq:gen_explicit}
\begin{align}
&\dot{\delta} = \omega - \omega_s \label{eq:dyn_ang}\\
&M\dot{\omega} = \Big[T_M - D(\omega-\omega_s) - \frac{E'_q V_\infty}{X_\Sigma}\sin\delta\Big] \label{eq:dyn_vel}\\
&T'_{d0} \dot{E}'_q = \Big[E_{fd} - \frac{X_d+X_e}{X_\Sigma}E'_q + \frac{(X_d-X'_d)V_\infty}{X_\Sigma}\cos\delta\Big] \label{eq:dyn_Eq_prime}
\end{align}
\end{subequations}
Here, $H$ is the inertia constant, $\omega_s$ the synchronous speed, $M := \frac{2H}{\omega_s}$ the normalized inertia, $D$ the damping coefficient, $T'_{d0}$ the $d$-axis open-circuit transient time constant, $X_d$ the synchronous $d$-axis reactance, and $E_{fd}$ is the (constant) field voltage. In \eqref{eq:dyn_vel}, the term $\frac{E'_q V_\infty}{X_\Sigma}\sin\delta$ is the electrical power $P_e$, representing the electromagnetic torque. The symbol $V_\infty$ is the infinite bus voltage and $X_\Sigma := X'_d + X_e$ is the total reactance (transient $d$-axis reactance $X'_d$ plus external line reactance $X_e$). In \eqref{eq:dyn_Eq_prime}, the $\cos\delta$ term arises from the $d$-axis stator current $I_d = \frac{E'_q - V_\infty\cos\delta}{X_\Sigma}$ after substitution into the flux-decay equation. The nonlinearities are: (i) $\sin\delta$ and $\cos\delta$ (trigonometric); (ii) $E'_q\sin\delta$ (bilinear). The input $u = T_M$ enters \emph{affinely} through \eqref{eq:dyn_vel} only. 

\subsection{Discrete-Time Formulation}

With sampling period $\Delta t > 0$ and Euler discretization, define the discrete-time system:
\begin{equation}\label{eq:gen_discrete}
x_{k+1} = f(x_k, u_k), \quad y_k = g(x_k, u_k)
\end{equation}
where $f:\R^3\times\R\to\R^3$ is given component-wise by:
\begin{align}
&\delta_{k+1} = \delta_k + \Delta t(\omega_k - \omega_s) \label{eq:disc1}\\
&\omega_{k+1} = \omega_k + \frac{\Delta t}{M}\Big[u_k - D(\omega_k - \omega_s) - \frac{E'_{q,k}V_\infty}{X_\Sigma}\sin\delta_k\Big] \label{eq:disc2}\\
&E'_{q,k+1} = E'_{q,k} + \frac{\Delta t}{T'_{d0}}\Big[E_{fd} - \frac{X_d+X_e}{X_\Sigma}E'_{q,k} \notag\\
&\qquad\qquad\qquad\qquad\qquad + \frac{(X_d-X'_d)V_\infty}{X_\Sigma}\cos\delta_k\Big]. \label{eq:disc3}
\end{align} The system has $n=3$ states, $m=1$ input, and $p=2$ outputs $y_k = (\tilde\omega_k,\; P_{e,k})^\top \in \R^2$, where $\tilde\omega_k := \omega_k - \omega_s$ is the speed deviation and $P_{e,k}$ is the electrical power.


\begin{assumption}[Operating Regime]\label{ass:operating}
The generator operates in a compact domain $\calX := \{(\delta,\omega,E'_q) : |\delta - \delta_s| \leq \delta_{\max},\; |\omega - \omega_s| \leq \omega_{\max},\; E'_{q,\min} \leq E'_q \leq E'_{q,\max}\}$ where $\delta_s$ is the pre-fault equilibrium angle, $\delta_{\max} < \pi/2$ (within the stability boundary), and $\omega_{\max} \ll \omega_s$ (small speed deviations relative to synchronous speed). The input satisfies $u \in \calU := [T_{M,\min}, T_{M,\max}]$.
\end{assumption}

Under Assumption~\ref{ass:operating}, the per-step angle change satisfies $|\Delta t(\omega_k - \omega_s)| \leq \Delta t\,\omega_{\max} \ll 1$ for typical sampling rates ($\Delta t \leq 0.005$\,s), which will be exploited in the Koopman embedding analysis.

\section{Analysis of Approximate Koopman Embedding for the Generator Model}\label{sec:koopman}

\subsection{Koopman Linear Embedding: Definition and Existence}

\begin{definition}[Koopman Linear Embedding \cite{shang2024}]\label{def:koopman}
The nonlinear system \eqref{eq:gen_discrete} admits a Koopman linear embedding of dimension $n_z$ if there exist linearly independent lifting functions $\Phi = (\Phi_1,\ldots,\Phi_{n_z})^\top : \R^n \to \R^{n_z}$ such that $z_k := \Phi(x_k)$ satisfies
\vspace{-2mm}
\begin{equation}\label{eq:koopman_system}
z_{k+1} = Az_k + Bu_k, \quad y_k = Cz_k + Du_k
\end{equation}
along all trajectories of \eqref{eq:gen_discrete} within the operating domain $\calX$, with $A \in \R^{n_z\times n_z}$, $B \in \R^{n_z\times m}$, $C \in \R^{p\times n_z}$, $D \in \R^{p\times m}$.
\end{definition}

\begin{definition}[Approximate Koopman Embedding]\label{def:approx_koopman}
The system \eqref{eq:gen_discrete} admits an \emph{approximate Koopman linear embedding} centered at an equilibrium $(x_s, u_s)$ with \emph{proportional error bound} $(\epsilon_A, \epsilon_B, \epsilon_C)$ and \emph{offset} $c_0 \geq 0$ if there exist $\Phi$, $A$, $B$, $C$, $D$ such that, defining the deviation variables $\bar{z}_k := \Phi(x_k) - \Phi(x_s)$ and $\bar{u}_k := u_k - u_s$, the dynamics satisfy
\begin{equation}\label{eq:approx_koopman}
\bar{z}_{k+1} = A\bar{z}_k + B\bar{u}_k + e_k, \quad \bar{y}_k = C\bar{z}_k + D\bar{u}_k + \eta_k
\end{equation}
along all trajectories within $\calX$, where $\bar{y}_k := y_k - y_s$, and the residuals satisfy:
\begin{equation}\label{eq:proportional_bound}
\norm{e_k}_2 \leq \epsilon_A\norm{\bar{z}_k}_2 + \epsilon_B\norm{\bar{u}_k}_2 + c_0
\end{equation}
\begin{equation}\label{eq:output_bound}
\norm{\eta_k}_2 \leq \epsilon_C\norm{\bar{z}_k}_2
\end{equation}
for constants $\epsilon_A, \epsilon_B, \epsilon_C \geq 0$. When $c_0 = 0$, the bound is \emph{purely proportional}; when $c_0 > 0$, it is \emph{proportional-with-offset}. The constant affine terms arising from the equilibrium (e.g., from constant inputs like $E_{fd}$) cancel exactly by centering at the equilibrium, since $\Phi(f(x_s,u_s)) = \Phi(x_s)$.
\end{definition}

The concept of proportional error bounds for data-driven Koopman surrogates originates in the work of \cite{bold2025}, which established pointwise error bounds for kernel EDMD that are proportional in the state and input. \cite{schimperna2025} used the same structure to prove asymptotic stability of Koopman MPC with terminal conditions. The centering at equilibrium in \eqref{eq:proportional_bound} is essential: a purely proportional bound ($c_0 = 0$) guarantees \emph{exponential} stability (the error vanishes at the setpoint at an exponential rate), while a proportional-with-offset bound ($c_0 > 0$) yields \emph{practical exponential stability} (exponential convergence to a neighborhood of radius $O(c_0)$). Definition~\ref{def:approx_koopman} combines the exact embedding concept of \cite{shang2024} with the proportional error framework of \cite{bold2025,schimperna2025}, extended to include the offset term that arises naturally in finite-dimensional approximations.

\begin{remark}[Interpretation of Approximate Embedding]\label{rem:approx_interp}
In Definition~\ref{def:approx_koopman}, the lifting $\Phi$ is an explicitly chosen, exactly computed function; it is not learned or approximated. The matrices $A$, $B$, $C$, $D$ are the best linear description of how $\Phi(x_k)$ evolves under the nonlinear dynamics. The residual $e_k$ arises because the nonlinear dynamics do not propagate $\Phi$ in a perfectly linear fashion: $\Phi(f(x_k,u_k)) \neq A\Phi(x_k) + Bu_k$ exactly. Neither $\Phi$ nor $A, B$ are ``wrong''; the error is structural, reflecting the fact that the observable algebra does not close in finite dimensions. The LTI system obtained by propagating $(A, B, C, D)$ without the residuals ($e_k = 0$, $\eta_k = 0$) is referred to as the \emph{nominal Koopman system}. It is neither the true nonlinear dynamics nor a ``true'' (infinite-dimensional) Koopman operator; it is a finite-dimensional linear model whose deviation from the plant is quantified by $(\epsilon_A, \epsilon_B, \epsilon_C, c_0)$. This terminology aligns with the robust control convention used in \cite{berberich2021}, where the nominal system is the model around which the robust MPC is designed, and the residuals play the role of plant-model mismatch.
\end{remark}

\vspace{-5mm}
\subsection{Koopman Embedding for the Synchronous Generator}

\begin{theorem}\label{thm:gen_koopman}
Consider the discretized generator \eqref{eq:disc1}--\eqref{eq:disc3} under Assumption~\ref{ass:operating}. Define the lifting:
\begin{equation}\label{eq:gen_lift}
\Phi(x) = \col(\delta,\; \tilde\omega,\; E'_q,\; \sin\delta,\; \cos\delta,\; E'_q\sin\delta,\; E'_q\cos\delta).
\end{equation}
Then $z_k = \Phi(x_k)$ satisfies an approximate Koopman linear embedding centered at the equilibrium $(x_s, u_s)$ with
\begin{equation}\label{eq:eps_bound}
\epsilon_A = 2\Delta t + c_6 + c_7 = O(\Delta t), \quad \epsilon_B = 0, \quad \epsilon_C = 0,
\end{equation}
\begin{equation}\label{eq:offset_bound}
c_0 = (2 + 2E'_{q,\max})\,(\Delta t\,\omega_{\max})^2 = O((\Delta t\,\omega_{\max})^2),
\end{equation}
where $c_6, c_7$ depend on $\Delta t/T'_{d0}$, $\mu$, $E'_{q,\max}$, and $E_{fd}$ (see Appendix~\ref{app:koopman_proof} for details). The bound is proportional with offset:
\begin{equation}\label{eq:gen_error_bound}
\norm{e_k}_2 \leq \epsilon_A\,\norm{\Phi(x_k) - \Phi(x_s)}_2 + c_0.
\end{equation}
The proportional-with-offset structure of \eqref{eq:gen_error_bound} is the form required by the stability analysis in Section~\ref{sec:approach1_stability}. The offset $c_0 > 0$ limits any controller based on this embedding to practical exponential stability.
\end{theorem}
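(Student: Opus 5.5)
The plan is to build $A$, $B$, $C$, $D$ row by row: propagate each of the seven observables in \eqref{eq:gen_lift} one step through \eqref{eq:disc1}--\eqref{eq:disc3}, keep every term that is exactly linear in $z_k$ and $u_k$, and put the remainder into $e_k$. Centering at $(x_s,u_s)$ removes all constant terms, because the equilibrium satisfies $\Phi(f(x_s,u_s))=\Phi(x_s)$. I will write $\theta_k := \Delta t\,\tilde\omega_k$, so that $|\theta_k|\le \Delta t\,\omega_{\max}\ll 1$ by Assumption~\ref{ass:operating}.

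\textbf{Exactly linear rows.} The first three observables propagate with no residual:
\begin{itemize}
\item $\delta_{k+1}=\delta_k+\Delta t\,\tilde\omega_k$;
\item $\tilde\omega_{k+1}=(1-\Delta t D/M)\tilde\omega_k+\frac{\Delta t}{M}u_k-\frac{\Delta t V_\infty}{M X_\Sigma}(E'_q\sin\delta)_k$, which is linear because $E'_q\sin\delta$ is itself a lifted coordinate;
\item $E'_{q,k+1}$, which is affine in $E'_{q,k}$ and $\cos\delta_k$.
\end{itemize}
The input enters only the $\tilde\omega$ row and does so exactly, so $\epsilon_B=0$. The outputs are $\tilde\omega_k=z_{2,k}$ and $P_{e,k}=\frac{V_\infty}{X_\Sigma}z_{6,k}$, both exact linear functions of $z_k$, so $C$ is exact with $D=0$ and $\epsilon_C=0$.

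\textbf{Trigonometric rows.} By the angle-addition formulas,
\[
\sin\delta_{k+1}=\sin\delta_k\cos\theta_k+\cos\delta_k\sin\theta_k .
\]
I replace $\cos\theta_k\approx 1$ and $\sin\theta_k\approx\theta_k$. The leftover product $\theta_k\cos\delta_k=\Delta t\,\tilde\omega_k\cos\delta_k$ is not linear in $z$. I will take the linear part as $\Delta t\,\tilde\omega_k\cos\delta_s$. The remainder $\Delta t\,\tilde\omega_k(\cos\delta_k-\cos\delta_s)$ is then bounded by $\Delta t\,\norm{\bar z_k}_2$, since $|\tilde\omega_k|\le\norm{\bar z_k}_2$ (note $\tilde\omega_s=0$) and $|\cos\delta_k-\cos\delta_s|\le 1$ after rescaling. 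This step supplies one $\Delta t$ in $\epsilon_A$, and the $\cos$ row supplies the other, giving the $2\Delta t$. The Taylor remainders $|1-\cos\theta|\le\theta^2/2$ and $|\sin\theta-\theta|\le|\theta|^3/6\le\theta^2$ contribute at most about $(\Delta t\,\omega_{\max})^2$ per row, which accounts for the "$2$" in $c_0$.

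\textbf{Bilinear rows (main obstacle).} Here
\[
E'_{q,k+1}\sin\delta_{k+1}=\Big(aE'_{q,k}+b\cos\delta_k+c\Big)\Big(\sin\delta_k\cos\theta_k+\cos\delta_k\sin\theta_k\Big),
\]
which produces terms such as $\cos\delta_k\sin\delta_k$, $\cos^2\delta_k$ and $E'_q\tilde\omega\cos\delta$ that are not in the span of $\Phi$. I will handle each in the same way: linearize about the equilibrium and bound the remainder proportionally to $\norm{\bar z_k}_2$. For example,
\[
\cos\delta\sin\delta-\cos\delta_s\sin\delta_s=\tfrac12(\sin 2\delta-\sin2\delta_s),
\]
whose deviation from its first-order linear term in $(\sin\delta-\sin\delta_s,\cos\delta-\cos\delta_s)$ is bounded by $\mu\norm{\bar z_k}_2$, where $\mu$ depends on $\delta_{\max}$. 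These terms carry the factors $b=\Delta t(X_d-X'_d)V_\infty/(T'_{d0}X_\Sigma)$ and $c=\Delta t E_{fd}/T'_{d0}$, which produce the constants $c_6,c_7$ depending on $\Delta t/T'_{d0}$, $\mu$, $E'_{q,\max}$ and $E_{fd}$. The Taylor remainders are multiplied by $|E'_q|\le E'_{q,\max}$, which gives the $2E'_{q,\max}(\Delta t\,\omega_{\max})^2$ part of $c_0$.

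The bookkeeping here is the delicate part. Every non-closing term must be shown to be either $O(\Delta t)\cdot\norm{\bar z}_2$ or absorbed into $c_0$, and the zeroth-order pieces must cancel exactly by centering.

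\textbf{Assembly.} Finally, I will combine the per-row residual bounds via $\norm{e_k}_2\le\sum_i|e_{i,k}|$. This yields $\epsilon_A=2\Delta t+c_6+c_7$ and the stated $c_0$, which is \eqref{eq:gen_error_bound}.
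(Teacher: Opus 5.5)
Your skeleton matches the paper's proof. Rows 1--3 are exact, rows 4--5 use angle addition with Lagrange remainders (each at most $\bar\theta^2$, $\bar\theta:=\Delta t\,\omega_{\max}$), rows 6--7 are treated as products, constants cancel by centering, and you finish with $\norm{e_k}_2\le\sum_i|e_{i,k}|$.

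The real difference is the choice of $A$. You linearize the non-closing terms at $x_s$, e.g.\ an entry $\Delta t\cos\delta_s$ coupling $\bar z_{2}$ into row 4. The paper instead keeps rows 4--7 of $A$ diagonal and puts each bilinear term into $e_k$ whole: $e_{4,k}=\Delta t\,z_{2,k}z_{5,k}+r_{4,k}$, bounded by $\Delta t\norm{\bar z_k}_2+\bar\theta^2$ using only $|z_{5,k}|\le 1$ and $z_{2,s}=0$.

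\textbf{The trigonometric step fails as written.} Assumption~\ref{ass:operating} gives only $|\cos\delta_k-\cos\delta_s|\le 2\sin(\delta_{\max}/2)<\sqrt2$, and values above $1$ occur: $\delta_s=0.5$, $\delta_{\max}=1.5$, $\delta_k=2$ gives about $1.29$. So ``$\le 1$ after rescaling'' is unjustified, and your route does not produce the $2\Delta t$ in $\epsilon_A$. The bound is still $O(\Delta t)$, and the fix is simply not to subtract $\cos\delta_s$.

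As for what each choice buys: yours makes the proportional part of the residual second order near $x_s$. It also keeps the $\tilde\omega\to(\sin\delta,\cos\delta)$ coupling in $A$, so for $\cos\delta_s\neq0$ the reachable subspace is no longer confined to $\{\delta,\tilde\omega\}$ as in Remark~\ref{rem:ctrl_obs}. The cost is messier, $\delta_s$-dependent constants. The paper's sparse $A$ gives the theorem's clean constants with almost no estimation.

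\textbf{Rows 6--7.} The paper avoids your term-by-term expansion by writing $z_{6,k+1}=z_{3,k+1}(z_{4,k}+e_{4,k})$. This reuses the row-4 bound, so the offset is $|z_{3,k+1}|\,|r_{4,k}|\le E'_{q,\max}\bar\theta^2$ at once. If you expand instead, you must multiply the Taylor remainders by the whole factor $aE'_{q,k}+b\cos\delta_k+c=E'_{q,k+1}$, not by $E'_{q,k}$ alone, to recover the stated $c_0$.

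A few attributions in your bookkeeping are off:
\begin{itemize}
\item The dominant $E'_{q,\max}\Delta t$ in $c_6$ comes from the cross term $z_{3,k+1}\Delta t\,z_{2,k}z_{5,k}$. That is essentially your factor $a=1-\alpha_q\kappa\approx 1$ multiplying $E'_q\tilde\omega\cos\delta$, not $b$ or $c$.
\item Your $c\sin\delta_k$ term is exactly linear. The paper nonetheless leaves $\alpha_qE_{fd}\bar z_{4,k}$ in $e_{6,k}$, which is why $E_{fd}$ enters $c_6$.
\item The paper's $\mu:=(X_d-X'_d)V_\infty/X_\Sigma$ is a physical constant, not a $\delta_{\max}$-dependent Lipschitz constant; your $b$ equals $\alpha_q\mu$ with $\alpha_q:=\Delta t/T'_{d0}$.
\end{itemize}

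Finally, your row 3, affine in $\cos\delta_k=z_{5,k}$, is the correct reading of \eqref{eq:disc3}. The appendix writes $\alpha_q\mu\,z_{7,k}$ there and carries it into rows 6--7, which is a slip. With $z_{5,k}$, row 3 is residual-free as claimed. The affected terms become $\alpha_q\mu(z_{4,k}z_{5,k}-z_{4,s}z_{5,s})$ and $\alpha_q\mu(z_{5,k}^2-z_{5,s}^2)$, which are still $O(\Delta t)\norm{\bar z_k}_2$.
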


\begin{proof}
See Appendix~\ref{app:koopman_proof}.
\end{proof}

\vspace{2mm}
As we can see in Appendix~\ref{app:koopman_proof}, the proportional constant $\epsilon_A = O(\Delta t)$ receives contributions from the trigonometric propagation in components 4--5 ($2\Delta t$) and the bilinear products in components 6--7 ($c_6 + c_7$), with the latter dominating due to the cross-term $z_{3,k+1}\cdot e_{4,k}$. The offset $c_0 = (2 + 2E'_{q,\max})(\Delta t\,\omega_{\max})^2$ arises from the Taylor remainders in the small-angle approximation. For typical parameters ($\Delta t = 0.0025$\,s, $T'_{d0} \approx 6$\,s, $\omega_{\max} = 0.02$\,pu): $\epsilon_A \approx 0.015$ and $c_0 \approx 10^{-8}$. The offset is negligible relative to $\epsilon_A$, with $c_0/\epsilon_A \approx 7\times 10^{-7}$\,pu. However, as shown in Section~\ref{sec:approach1_stability}, the dominant source of conservatism in the stability bound is $\epsilon_A$ itself, not $c_0$; the proportional error structure is exploited in Theorem~\ref{thm:tight_stability} to recover meaningful bounds.

\begin{remark}[Non-Controllability and Non-Observability of the Embedding]\label{rem:ctrl_obs}
The pair $(A,B)$ in \eqref{eq:koopman_system} with $B = (0, \frac{\Delta t}{M}, 0, 0, 0, 0, 0)^\top$ is \emph{not} controllable: the controllability matrix $[B, AB, \ldots, A^6 B]$ has rank 2, spanning only $\{z_1, z_2\} = \{\delta, \tilde\omega\}$. Components 3--7 are unreachable because their coupling to $z_1, z_2$ (e.g., $\tilde\omega$ driving $\sin\delta$ via $\Delta t\,\tilde\omega_k\cos\delta_k$) resides in the residual $e_k$, not in $A$. Similarly, $(C,A)$ with $C \in \R^{2\times 7}$ is not observable: the observability matrix has rank 2, spanning $\{z_2, z_6\} = \{\tilde\omega, E'_q\sin\delta\}$. Consequently, the I/O transfer function $G(z) = C(zI-A)^{-1}B + D$ admits a \emph{minimal realization} of order $n_\mathrm{eff} \leq n_z$, which is controllable and observable by construction. Neither controllability nor observability of the full $n_z$-dimensional system is required: \cite{shang2024} shows that $T_{\mathrm{ini}} \geq n_z$ suffices for the data-driven representation (Theorem~\ref{thm:dd_exact}), and \cite{berberich2021} note explicitly that $n_z$ may be used as an ``upper bound'' on the true system order $n_\mathrm{eff}$, with all stability results remaining valid.
\end{remark}

\section{Stability of Data-Driven MPC Under Approximate Koopman Embeddings}\label{sec:approach1}

\subsection{Data-Driven Representation}

We now apply the extended Willems' fundamental lemma \cite{shang2024} to the generator.

\begin{definition}[Lifted Excitation \cite{shang2024}]\label{def:lifted_exc}
Consider a nonlinear system \eqref{eq:gen_discrete} with a Koopman linear embedding (Definition~\ref{def:koopman}). We say $l$ trajectories of length $L$ from \eqref{eq:gen_discrete}, denoted $\col(u^{d,i}, y^{d,i})$, $i=1,\ldots,l$, with $l \geq mL + n_z$, provide \emph{lifted excitation of order $L$} if the matrix
\begin{equation}\label{eq:HK_matrix}
H_K := \begin{bmatrix} u^{d,1} & \cdots & u^{d,l} \\ \Phi(x^1_0) & \cdots & \Phi(x^l_0)\end{bmatrix} \in \R^{(mL+n_z)\times l}
\end{equation}
has full row rank, where $x^i_0 \in \R^n$ is the initial state of the $i$-th trajectory.
\end{definition}

\begin{theorem}[Data-Driven Representation for the Generator (Exact Case)]\label{thm:dd_exact}
Consider the generator \eqref{eq:gen_discrete} with an \textbf{exact} Koopman embedding ($\epsilon_A = \epsilon_B = \epsilon_C = 0$) of dimension $n_z = 7$. Collect a trajectory library of $l \geq L + 7$ trajectories, each of length $L := T_\mathrm{ini} + N$ with $T_\mathrm{ini} \geq 7$, that provide lifted excitation of order $L$ (Definition~\ref{def:lifted_exc}). Arrange these trajectories into the data matrix
\vspace{-1mm}
\begin{equation}\label{eq:Hd}
H_d := \begin{bmatrix} U_P^\top & Y_P^\top & U_F^\top & Y_F^\top \end{bmatrix}^\top \in \R^{(m+p)L \times l}
\end{equation}
\vspace{-5mm}

\noindent where $U_P, Y_P \in \R^{mT_\mathrm{ini} \times l}$, $\R^{pT_\mathrm{ini} \times l}$ contain the first $T_\mathrm{ini}$ steps (past) of the input and output trajectories, and $U_F, Y_F \in \R^{mN \times l}$, $\R^{pN \times l}$ contain the remaining $N$ steps (future). At time $k$, let $u_\mathrm{ini} := \col(u_{k-T_\mathrm{ini}}, \ldots, u_{k-1})$ and $y_\mathrm{ini} := \col(y_{k-T_\mathrm{ini}}, \ldots, y_{k-1})$ be the most recent $T_\mathrm{ini}$ input-output measurements. Then, for any future input $u_F := \col(u_k, \ldots, u_{k+N-1})$, the sequence $\col(u_\mathrm{ini}, y_\mathrm{ini}, u_F, y_F)$ is a valid length-$L$ trajectory of the generator if and only if there exists $g \in \R^l$ such that
\vspace{-1mm}
\begin{equation}\label{eq:dd_repr}
\col(U_P, Y_P, U_F, Y_F)\,g = \col(u_\mathrm{ini}, y_\mathrm{ini}, u_F, y_F).
\end{equation}
\end{theorem}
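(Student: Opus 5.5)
The argument reduces everything to the lifted LTI system \eqref{eq:koopman_system} and then applies a standard rank argument in the style of Willems' lemma, following \cite{shang2024}. Under the exact embedding, every length-$L$ trajectory $\col(u,y)$ of the generator starting at $x_0$ has the explicit form
\[
y = \mathcal{O}_L\,\Phi(x_0) + \mathcal{T}_L\,u,
\]
where $\mathcal{O}_L := \col(C, CA, \ldots, CA^{L-1})$ is the extended observability matrix and $\mathcal{T}_L$ is the block lower-triangular Toeplitz matrix of Markov parameters built from $(A,B,C,D)$. Stacking the library columnwise gives
\[
\col(U_P, U_F, Y_P, Y_F) = \begin{bmatrix} I & 0 \\ \mathcal{T}_L & \mathcal{O}_L \end{bmatrix} H_K
\]
(after reordering rows). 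The whole proof rests on this factorization.

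\emph{Necessity.} Let $\col(u_\mathrm{ini}, y_\mathrm{ini}, u_F, y_F)$ be a trajectory of the generator, with some initial state $x_{k-T_\mathrm{ini}} \in \calX$. Then it equals $\begin{bmatrix} I & 0 \\ \mathcal{T}_L & \mathcal{O}_L\end{bmatrix}\col(u, \Phi(x_{k-T_\mathrm{ini}}))$. Lifted excitation (Definition~\ref{def:lifted_exc}) says $H_K$ has full row rank $mL+n_z$. Hence some $g$ solves $H_K g = \col(u, \Phi(x_{k-T_\mathrm{ini}}))$, and multiplying by the factor yields \eqref{eq:dd_repr}.

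\emph{Sufficiency.} Given $g$ satisfying \eqref{eq:dd_repr}, I set $\bar z_0 := [\Phi(x^1_0)\ \cdots\ \Phi(x^l_0)]\,g$. Then the right-hand side equals $\col(u, \mathcal{O}_L \bar z_0 + \mathcal{T}_L u)$, so it is an I/O trajectory of the LTI system \eqref{eq:koopman_system} with initial lifted state $\bar z_0$.

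\emph{Main obstacle.} The difficulty is that $\bar z_0$ is a linear combination of lifted states and need not lie in $\Phi(\calX)$, so the sufficiency trajectory is a priori only a trajectory of the lifted system, not of the generator. I would handle this with the $T_\mathrm{ini} \ge n_z$ part of the statement. Since $(u_\mathrm{ini}, y_\mathrm{ini})$ is an actual measured past trajectory, $\mathcal{O}_{T_\mathrm{ini}}\bar z_0 = \mathcal{O}_{T_\mathrm{ini}}\Phi(x_{k-T_\mathrm{ini}})$. Because $T_\mathrm{ini} \ge n_z$ exceeds the observability index, Cayley--Hamilton gives $\ker \mathcal{O}_{T_\mathrm{ini}} = \ker \mathcal{O}_L$, the unobservable subspace. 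Therefore $\mathcal{O}_L\bar z_0 = \mathcal{O}_L\Phi(x_{k-T_\mathrm{ini}})$. It follows that $y_F$ is uniquely determined and coincides with the output the generator produces from its true state $x_{k-T_\mathrm{ini}}$ under $u_F$. This argument needs the future trajectory to remain in $\calX$ where the embedding holds, which I would state as an implicit standing assumption.

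This step also explains the uniqueness of $y_F$ given $u_F$ despite the non-observability noted in Remark~\ref{rem:ctrl_obs}. Finally, the required library size $l \ge L+7$ is exactly $mL+n_z$ with $m=1$ and $n_z=7$, which is consistent with the full-row-rank requirement on $H_K$.
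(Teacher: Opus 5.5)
Your proof is correct and takes the same route as the paper, whose proof simply invokes the extended fundamental lemma of \cite{shang2024} (their Theorem~3, with the condition $T_\mathrm{ini}\ge n_z$ handled via their Theorem~1). Your factorization of the data matrix through $H_K$ and the Cayley--Hamilton identity $\ker\mathcal{O}_{T_\mathrm{ini}}=\ker\mathcal{O}_L$ are the standard argument behind those cited results, written out explicitly. Your sufficiency step is actually stated more accurately than the paper's claim that $T_\mathrm{ini}\ge n_z$ ``ensures uniqueness of the initial lifted state'': since $(C,A)$ is unobservable, only $\mathcal{O}_L\bar z_0$, and hence $y_F$, is determined. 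Your caveat that trajectories must remain in $\calX$ is also an assumption the paper relies on implicitly, and it is needed in the necessity direction as well, not only for the future segment.
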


\vspace{-2mm}
\begin{proof}
This is a direct application of \cite[Theorem~3]{shang2024} with $n_z = 7$. The conditions are verified as follows: (i) the Koopman embedding exists by assumption; (ii) the trajectory library provides lifted excitation by hypothesis; (iii) $T_\mathrm{ini} \geq n_z = 7$ ensures uniqueness of the initial lifted state by \cite[Theorem~1]{shang2024}, even without observability of $(C,A)$.
\end{proof}

\vspace{-2mm}
\subsection{From Exact to Approximate: Impact of Koopman Error on Data-Driven Control}\label{sec:approach1_stability}

The exact case is idealised. In practice, the Koopman embedding is approximate (Theorem~\ref{thm:gen_koopman}), and we must quantify how the approximation error propagates through the data-driven control framework.

The central question is the following. The earlier work\ \cite{berberich2021} proved stability of data-driven MPC for LTI systems with bounded output noise. Our system is nonlinear. Why should their guarantees transfer? The existence of an approximate Koopman embedding provides the answer: it certifies that the nonlinear system's input-output data can be decomposed as $y^d = y^{d,\mathrm{nom}} + \epsilon^d$, where $y^{d,\mathrm{nom}}$ is what the nominal Koopman system (Remark~\ref{rem:approx_interp}) would produce, and $\epsilon^d$ is bounded by an effective noise level $\bar\epsilon$. The Koopman analysis quantifies $\bar\epsilon$ in terms of the embedding quality $(\epsilon_A, \epsilon_B, \epsilon_C, c_0)$, which in turn depends on physical parameters of the system (such as sampling period and operating regime). Thus, Theorem~\ref{thm:koopman_stability} below provides a \emph{stability guarantee for applying an LTI data-driven controller to a nonlinear system}, with the Koopman embedding serving as the theoretical bridge. 

We employ the robust data-driven MPC of \cite[Problem~(6), Algorithm~2]{berberich2021}, which is summarized below. Consider a nonlinear system \eqref{eq:gen_discrete} admitting an approximate Koopman embedding (Definition~\ref{def:approx_koopman}) with parameters $(\epsilon_A, \epsilon_B, \epsilon_C, c_0)$ and lifted dimension $n_z$. Collect a trajectory library as in Theorem~\ref{thm:dd_exact}. The robust MPC solves at each time $t$:
\vspace{0mm}
\begin{subequations}\label{eq:robust_mpc}
\begin{align}
&\min_{\substack{\alpha(t),\,\sigma(t),\\ \hat{u}(t),\,\hat{y}(t)}} \;\sum_{k=0}^{L-1}\ell(\hat{u}_k(t), \hat{y}_k(t)) + \lambda_\alpha\norm{\alpha(t)}_2^2 + \lambda_\sigma\norm{\sigma(t)}_2^2 \label{eq:mpc_cost}\\
&\text{s.t.}\; \begin{bmatrix}\hat{u}_{[-n_z,L-1]}(t) \\ \hat{y}_{[-n_z,L-1]}(t)\end{bmatrix} = \begin{bmatrix}\calH_{L+n_z}(u^d) \\ \calH_{L+n_z}(y^d)\end{bmatrix}\alpha(t) + \begin{bmatrix}0\\\sigma(t)\end{bmatrix} \label{eq:mpc_dyn}\\
&\hat{u}_{[-n_z,-1]}(t) = u_{[t-n_z,t-1]},\;\; \hat{y}_{[-n_z,-1]}(t) = y_{[t-n_z,t-1]} \label{eq:mpc_ic}\\
&\hat{u}_{[L-n_z,L-1]}(t) = u^s_{n_z},\;\; \hat{y}_{[L-n_z,L-1]}(t) = y^s_{n_z} \label{eq:mpc_term}\\
&\hat{u}_k(t) \in \calU,\;\; \norm{\sigma_k(t)}_\infty \leq \bar\epsilon(1+\norm{\alpha(t)}_1) \label{eq:mpc_constr}
\end{align}
\end{subequations}
where $\ell(\hat{u},\hat{y}) = \norm{\hat{u}-u^s}^2_R + \norm{\hat{y}-y^s}^2_Q$ with $Q, R \succ 0$, and $\lambda_\alpha, \lambda_\sigma > 0$ are regularization parameters. The Hankel matrices $\calH_{L+n_z}(u^d)$ and $\calH_{L+n_z}(y^d)$ are built directly from the measured nonlinear data (i.e., no model identification or noise decomposition is needed at the implementation level). The slack variable $\sigma$ and its constraint \eqref{eq:mpc_constr} absorb the mismatch between the nonlinear data and any LTI model. The scheme is applied in an $n_z$-step fashion: solve \eqref{eq:robust_mpc}, apply $u_{[t,t+n_z-1]} = \hat{u}^*_{[0,n_z-1]}(t)$, then set $t \leftarrow t + n_z$ and repeat.

Note that the MPC scheme \eqref{eq:robust_mpc} operates directly on input-output data from the nonlinear system; the Koopman embedding matrices $A$, $B$, $C$, $D$ do not appear in the controller and need not be computed. The role of the Koopman analysis is purely certifying: the existence of an approximate embedding (Definition~\ref{def:approx_koopman}) guarantees that the nonlinear data can be interpreted as noisy data from an LTI system, enabling the stability theory of \cite{berberich2021} to be applied. The effective noise level $\bar\epsilon$, defined precisely in \eqref{eq:epsbar_def}, quantifies the degree to which the nonlinear system deviates from LTI behavior.

To state the stability result, we require the following quantity from \cite[eq.~(8)]{berberich2021}. Let $H_{ux}$ be the stacked input-state data matrix constructed from the offline data (where the state trajectory corresponds to some minimal realization of $(u^d, y^d)$), and let $H_{ux}^\dagger := H_{ux}^\top(H_{ux}H_{ux}^\top)^{-1}$ be its right-inverse. Define:
\begin{equation}\label{eq:cpe}
c_\mathrm{pe} := \norm{H_{ux}^\dagger}_2^2.
\end{equation}

\vspace{-1mm}
\noindent This is a quantitative measure of the persistence of excitation: smaller $c_\mathrm{pe}$ indicates better excitation, and $c_\mathrm{pe}$ decreases with increasing data richness or larger amplitude of the exciting input $u^d$ \cite{berberich2021}.

\begin{lemma}[Koopman Error as Bounded Output Noise]\label{lem:noise_bound}
Consider a nonlinear system \eqref{eq:gen_discrete} admitting an approximate Koopman embedding (Definition~\ref{def:approx_koopman}) of dimension $n_z$ with parameters $(\epsilon_A, \epsilon_B, \epsilon_C, c_0)$, operating within $\calX$ under inputs $u \in \calU$ (Assumption~\ref{ass:operating}). Let $(u^{d,i}, y^{d,i})$ be any trajectory of the nonlinear system of length $L$. Let $(A, B, C, D)$ be the matrices from Definition~\ref{def:approx_koopman} and define the \emph{nominal Koopman trajectory} (cf.\ Remark~\ref{rem:approx_interp}): $\hat{z}^{\mathrm{nom},i}_0 = \Phi(x^i_0)$, $\hat{z}^{\mathrm{nom},i}_{k+1} = A\hat{z}^{\mathrm{nom},i}_k + Bu^{d,i}_k$, and $y^{\mathrm{nom},i}_k = C\hat{z}^{\mathrm{nom},i}_k + Du^{d,i}_k$. Then the output discrepancy at step $k$ satisfies:
\vspace{-1mm}
\begin{equation}\label{eq:noise_bound_lemma}
\norm{y^{d,i}_k - y^{\mathrm{nom},i}_k}_2 \leq \bar\epsilon_k, \quad \forall\, k = 0, \ldots, L-1
\end{equation}
Defining $\bar{e} := \epsilon_A\,\mathrm{diam}_z + \epsilon_B\,\mathrm{diam}_u + c_0$, the per-step bound is:
\vspace{0mm}
\begin{equation}\label{eq:eps_k}
\bar\epsilon_k := \norm{C}_2\,\bar{e}\sum_{l=0}^{k-1}\norm{A}^{l}_2 + \epsilon_C\,\mathrm{diam}_z.
\end{equation}
A uniform bound over all $k \leq L$, as required by \cite[Problem~(6)]{berberich2021}, is:
\vspace{-2mm}
\begin{equation}\label{eq:epsbar_def}
\bar\epsilon := \max_{0 \leq k \leq L-1}\bar\epsilon_k = \norm{C}_2\,\bar{e}\sum_{l=0}^{L-2}\norm{A}^{l}_2 + \epsilon_C\,\mathrm{diam}_z.
\end{equation}
\end{lemma}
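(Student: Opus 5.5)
The plan is to compare the true lifted trajectory with the nominal Koopman trajectory through their difference, and to work entirely in the centered coordinates of Definition~\ref{def:approx_koopman}. The bound then follows from a discrete variation-of-constants formula.

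\textbf{Step 1: centering.} I will first check that the nominal recursion, written in absolute coordinates in the statement, is the same as its centered version. By the equilibrium cancellation in Definition~\ref{def:approx_koopman}, the affine terms cancel, so $\Phi(x_s) = A\Phi(x_s) + Bu_s$ and $y_s = C\Phi(x_s) + Du_s$. Hence $\bar{\hat z}^{\mathrm{nom},i}_k := \hat z^{\mathrm{nom},i}_k - \Phi(x_s)$ satisfies $\bar{\hat z}^{\mathrm{nom},i}_{k+1} = A\bar{\hat z}^{\mathrm{nom},i}_k + B\bar u^{d,i}_k$, and $y^{\mathrm{nom},i}_k - y_s = C\bar{\hat z}^{\mathrm{nom},i}_k + D\bar u^{d,i}_k$.

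This is the step I expect to be the main obstacle. The definition guarantees the cancellation only implicitly, so it has to be read as an identity on $(A,B,C,D)$. If that reading fails, I would instead define the nominal trajectory directly in deviation variables. The lemma's conclusion is unaffected by that choice, because $y^{d,i}_k - y^{\mathrm{nom},i}_k$ equals the difference of the centered outputs.

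\textbf{Step 2: error recursion.} Let $\Delta_k := \bar z_k - \bar{\hat z}^{\mathrm{nom},i}_k$, where $\bar z_k = \Phi(x^i_k) - \Phi(x_s)$. Subtracting the nominal recursion from \eqref{eq:approx_koopman} cancels the input terms exactly, because both are driven by the same $u^{d,i}$. This gives
$$\Delta_{k+1} = A\Delta_k + e_k, \qquad \Delta_0 = 0,$$
and therefore $\Delta_k = \sum_{j=0}^{k-1} A^{k-1-j} e_j$.

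\textbf{Step 3: residual and state bounds.} Since the trajectory stays in $\calX$ with inputs in $\calU$, the definitions of $\mathrm{diam}_z$ and $\mathrm{diam}_u$ give $\|\bar z_j\|_2 \le \mathrm{diam}_z$ and $\|\bar u_j\|_2 \le \mathrm{diam}_u$. Substituting into \eqref{eq:proportional_bound} yields $\|e_j\|_2 \le \bar e$ for every $j$. Submultiplicativity of the norm then gives
$$\|\Delta_k\|_2 \le \bar e \sum_{l=0}^{k-1}\|A\|_2^l.$$

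\textbf{Step 4: output bound.} The output discrepancy is $y^{d,i}_k - y^{\mathrm{nom},i}_k = C\Delta_k + \eta_k$. By \eqref{eq:output_bound}, $\|\eta_k\|_2 \le \epsilon_C\|\bar z_k\|_2 \le \epsilon_C\,\mathrm{diam}_z$. The triangle inequality then gives \eqref{eq:eps_k}.

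\textbf{Step 5: uniform bound.} Each summand $\|A\|_2^l$ is nonnegative, so $\bar\epsilon_k$ is nondecreasing in $k$. The maximum over $0\le k\le L-1$ is therefore attained at $k = L-1$, whose sum runs over $l = 0,\dots,L-2$. This gives \eqref{eq:epsbar_def}.
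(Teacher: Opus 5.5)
Your Steps 2--5 are the paper's proof: the error recursion $\Delta_{k+1}=A\Delta_k+e_k$ with $\Delta_0=0$, variation of constants, $\norm{e_j}_2\le\bar e$ because $x_j\in\calX$ and $u_j\in\calU$, the triangle inequality with $\norm{\eta_k}_2\le\epsilon_C\,\mathrm{diam}_z$, and monotonicity in $k$.

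The problem is Step 1. You flag it correctly but resolve it incorrectly. The cancellation sentence in Definition~\ref{def:approx_koopman} does not yield $\Phi(x_s)=A\Phi(x_s)+Bu_s$. It says the opposite: the uncentered relation may carry constant affine terms, and these vanish only after subtracting the equilibrium relation. For the generator, \eqref{eq:z3} and \eqref{eq:A_matrix} give $\big((I-A)\Phi(x_s)-Bu_s\big)_3=\alpha_q E_{fd}\neq0$.

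With $d:=(I-A)\Phi(x_s)-Bu_s$, the nominal recursion as written in the lemma gives $\Delta_{k+1}=A\Delta_k+e_k+d$. The constant $d$ is not controlled by \eqref{eq:proportional_bound}. A counterexample: $x_{k+1}=ax_k+bu_k+c$ with $c\neq0$ and $\Phi(x)=y=x$ satisfies \eqref{eq:approx_koopman} with all error parameters zero. So $\bar\epsilon_1=0$, yet $y_1-y^{\mathrm{nom}}_1=c$. Your fallback is therefore circular. A deviation-coordinate nominal trajectory is a different object, and its discrepancy matches the lemma's exactly when the Step~1 identity holds.

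The paper's proof has the same hole. It simply asserts $\Phi(x^i_{k+1})=A\Phi(x^i_k)+Bu^{d,i}_k+e^i_k$ and $y^{d,i}_k=C\Phi(x^i_k)+Du^{d,i}_k+\eta^i_k$, and neither follows from \eqref{eq:approx_koopman}. To make either argument correct, you need one of two repairs:
\begin{itemize}
\item add $(I-A)\Phi(x_s)=Bu_s$ and $y_s=C\Phi(x_s)+Du_s$ as explicit hypotheses; or
\item state the lemma with the nominal trajectory run in deviation variables, $\hat z^{\mathrm{nom},i}_k:=\Phi(x_s)+\bar{\hat z}^{\mathrm{nom},i}_k$ and $y^{\mathrm{nom},i}_k:=y_s+C\bar{\hat z}^{\mathrm{nom},i}_k+D\bar u^{d,i}_k$.
\end{itemize}
After either repair, your Steps 2--5 go through verbatim. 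The second option makes the nominal model affine in absolute input--output coordinates. That matters when the lemma is later used to invoke the LTI results of \cite{berberich2021}.
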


\begin{proof}
The nominal Koopman trajectory $(\hat{z}^{\mathrm{nom},i}_k, y^{\mathrm{nom},i}_k)$ is the output of the LTI system $(A, B, C, D)$ from Definition~\ref{def:approx_koopman}, driven by the same input $u^{d,i}$ and initialized at $\hat{z}^{\mathrm{nom},i}_0 = \Phi(x^i_0)$, but propagated without the residuals $e_k$ and $\eta_k$. It represents what the approximate embedding \emph{would} predict if the linear relationship were exact.
The actual output satisfies $y^{d,i}_k = C\Phi(x^i_k) + Du^{d,i}_k + \eta^i_k$ with $\norm{\eta^i_k}_2 \leq \epsilon_C\norm{\bar{z}^i_k}_2$ (Definition~\ref{def:approx_koopman}). The output noise is:
\begin{equation}\label{eq:output_noise}
\epsilon^{d,i}_k := y^{d,i}_k - y^{\mathrm{nom},i}_k = C\underbrace{\big(\Phi(x^i_k) - \hat{z}^{\mathrm{nom},i}_k\big)}_{=:\,\Delta^i_k} + \eta^i_k.
\end{equation}
We bound $\Delta^i_k$. The actual lifted state satisfies $\Phi(x^i_{k+1}) = A\Phi(x^i_k) + Bu^{d,i}_k + e^i_k$, while the nominal satisfies $\hat{z}^{\mathrm{nom},i}_{k+1} = A\hat{z}^{\mathrm{nom},i}_k + Bu^{d,i}_k$. Subtracting gives $\Delta^i_{k+1} = A\Delta^i_k + e^i_k$ with $\Delta^i_0 = 0$, which unrolls to $\Delta^i_k = \sum_{j=0}^{k-1}A^{k-1-j}\,e^i_j$. Taking norms:
\begin{equation}\label{eq:delta_norm}
\norm{\Delta^i_k}_2 \leq \sum_{j=0}^{k-1}\norm{A}^{k-1-j}_2\norm{e^i_j}_2.
\end{equation}
From Definition~\ref{def:approx_koopman}, $\norm{e^i_j}_2 \leq \epsilon_A\norm{\bar{z}^i_j}_2 + \epsilon_B\norm{\bar{u}^i_j}_2 + c_0$. Since $x^i_j \in \calX$ and $u^{d,i}_j \in \calU$ by Assumption~\ref{ass:operating}, each residual satisfies $\norm{e^i_j}_2 \leq \bar{e}$ where $\bar{e} := \epsilon_A\,\mathrm{diam}_z + \epsilon_B\,\mathrm{diam}_u + c_0$, so $\norm{\Delta^i_k}_2 \leq \bar{e}\sum_{l=0}^{k-1}\norm{A}^{l}_2$. Combining with \eqref{eq:output_noise} via the triangle inequality:
\begin{equation}\label{eq:eps_bar}
\norm{\epsilon^{d,i}_k}_2 \leq \norm{C}_2\,\bar{e}\sum_{l=0}^{k-1}\norm{A}^{l}_2 + \epsilon_C\,\mathrm{diam}_z = \bar\epsilon_k.
\end{equation}
Since the sum increases in $k$, the maximum is at $k = L-1$, giving \eqref{eq:epsbar_def}.
\end{proof}

\begin{remark}
Lemma~\ref{lem:noise_bound} is the key bridge between the Koopman theory and the data-driven MPC framework. It shows that the offline data $(u^d, y^d)$ from the nonlinear system can be decomposed as $(u^d, y^{d,\mathrm{nom}} + \epsilon^d)$ with $\norm{\epsilon^d_k}_\infty \leq \bar\epsilon$, where $y^{d,\mathrm{nom}}$ are the outputs of the nominal Koopman system. This decomposition is purely analytical---the controller never computes $y^{d,\mathrm{nom}}$---but it enables the stability machinery of \cite{berberich2021} to be applied. The same bound applies to online initial conditions: if $(u_{[t-n_z,t-1]}, y_{[t-n_z,t-1]})$ are the most recent $n_z \leq L$ measurements from the nonlinear system in closed loop, the discrepancy from the nominal Koopman output also satisfies $\norm{\epsilon_t}_\infty \leq \bar\epsilon$. When $\norm{A}_2 < 1$, the geometric sum in \eqref{eq:epsbar_def} saturates: $\sum_{l=0}^{L-2}\norm{A}^l_2 \leq \frac{1}{1-\norm{A}_2}$, so $\bar\epsilon$ is bounded independently of the horizon $L$.
\end{remark}

\begin{theorem}[Stability of Data-Driven MPC under Approximate Koopman Embedding]\label{thm:koopman_stability}
Consider a nonlinear system \eqref{eq:gen_discrete} admitting an approximate Koopman embedding (Definition~\ref{def:approx_koopman}) of dimension $n_z$ with parameters $(\epsilon_A, \epsilon_B, \epsilon_C, c_0)$, and let $\bar\epsilon$ be the effective noise level from Lemma~\ref{lem:noise_bound}. Suppose:
\begin{enumerate}[label=(\alph*)]
\item The offline input $u^d$ is persistently exciting of order $L + 2n_z$,
\item The prediction horizon satisfies $L \geq 2n_z$,
\item The regularization parameters satisfy $\lambda_\alpha > 0$, $\lambda_\sigma > 0$,
\item The quantity $c_\mathrm{pe}\,\bar\epsilon$ is sufficiently small.
\end{enumerate}
Then the $n_z$-step MPC scheme \eqref{eq:robust_mpc} is \textbf{practically exponentially stable}: there exist constants $c > 0$ and $\rho \in (0,1)$ such that, for all initial states in a region of attraction that grows as $\bar\epsilon \to 0$:
\begin{equation}\label{eq:practical_stability}
\norm{x_k - x_s}_2 \leq c\,\rho^k\norm{x_0 - x_s}_2 + \beta(\bar\epsilon), \quad \forall\, k \geq 0
\end{equation}
where $\beta:\R_{\geq 0} \to \R_{\geq 0}$ is continuous with $\beta(0) = 0$. The constants $c$, $\rho$, and $\beta$ depend on the cost matrices $Q$, $R$, the regularization parameters $\lambda_\alpha$, $\lambda_\sigma$, and the system dimensions. If $\bar\epsilon = 0$ (exact Koopman embedding), then $\beta(\bar{\epsilon}) = 0$ and \eqref{eq:practical_stability} reduces to exponential stability.
\end{theorem}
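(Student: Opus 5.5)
The proof reduces Theorem~\ref{thm:koopman_stability} to the robust stability result of \cite[Theorem~3]{berberich2021} for LTI systems with bounded output noise. It then translates the conclusion, stated in the lifted/minimal-realization coordinates, back to the original state $x$. The argument has three steps.

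\textbf{Step 1: reinterpret the data.} By Lemma~\ref{lem:noise_bound}, both the offline data $(u^d,y^d)$ and the online window $(u_{[t-n_z,t-1]},y_{[t-n_z,t-1]})$ can be written as $y = y^{\mathrm{nom}} + \epsilon$ with $\norm{\epsilon_k}_\infty \le \norm{\epsilon_k}_2 \le \bar\epsilon$. Here $y^{\mathrm{nom}}$ is generated by the nominal Koopman system $(A,B,C,D)$. By Remark~\ref{rem:ctrl_obs}, that system admits a minimal realization of order $n_\mathrm{eff}\le n_z$, and $n_z$ can serve as the order upper bound required in \cite{berberich2021}.

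One point needs care for the online data. The nominal trajectory must be re-initialized at $\Phi(x_{t-n_z})$ at the start of each window, so that the accumulated residual never exceeds $L$ steps. Lemma~\ref{lem:noise_bound} then applies window by window, provided the closed loop remains in $\calX$.

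\textbf{Step 2: invoke the robust result.} Hypotheses (a)--(c) match the assumptions of \cite[Theorem~3]{berberich2021}: persistence of excitation of order $L+2n_z$, horizon $L\ge 2n_z$, and positive regularization. Hypothesis (d) supplies their smallness condition on $c_\mathrm{pe}\bar\epsilon$, under which $\lambda_\alpha,\lambda_\sigma$ can be chosen in the admissible range.

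That theorem gives practical exponential stability of the $n_z$-step scheme for the nominal LTI state $\xi$ in its minimal realization. Specifically, there is a Lyapunov function $V$ that is quadratically bounded in $\norm{\xi-\xi_s}$ and decreases by a factor $\rho^{n_z}$ every $n_z$ steps, up to an additive term $\gamma(\bar\epsilon)$ with $\gamma(0)=0$. Its region of attraction is a sublevel set $V\le V_{\mathrm{ROA}}(\bar\epsilon)$ that grows as $\bar\epsilon\to0$.

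The main obstacle is whether the true closed-loop output, rather than the nominal output, can legitimately be treated as ``nominal plus bounded noise'' along the closed loop. The plant is nonlinear, so $\xi$ is not an actual state of it. I would define $\xi_t$ as the minimal-realization state consistent with $\Phi(x_t)$, i.e.\ a linear map $T\bar z_t$. The mismatch between the plant's evolution of $T\bar z$ and the nominal evolution over $n_z$ steps is then bounded by $\norm{T}\bar e\sum_l\norm{A}^l$. This is exactly the class of bounded disturbance that the robust argument of \cite{berberich2021} absorbs through the slack constraint \eqref{eq:mpc_constr}. Two further facts are needed here:
\begin{itemize}[label=--]
\item \emph{Invariance.} Invariance of the sublevel set guarantees that trajectories stay in $\calX$, so Step 1 remains valid along the whole closed loop.
\item \emph{Unreachable modes.} The unreachable components 3--7 are not steered by the controller. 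Their effect must enter through $e_k$ and be bounded, which holds because they lie in the compact set $\calX$.
\end{itemize}

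\textbf{Step 3: return to $x$.} Iterating the decrease of $V$ gives
\[
\norm{\xi_k-\xi_s}\le c_1\rho^k\norm{\xi_0-\xi_s}+\beta_1(\bar\epsilon),
\]
with intermediate steps inside each $n_z$-block handled by a finite-step growth bound. To pass to $x$, I would use that $\Phi$ contains $(\delta,\tilde\omega,E'_q)$ as coordinates, so $\norm{x-x_s}\le\norm{\Phi(x)-\Phi(x_s)}$. I would also use that $\Phi$ is Lipschitz on $\calX$, with constant $L_\Phi$, to bound the initial condition. The remaining components of $\bar z$ are controlled via output data and $T_\mathrm{ini}\ge n_z$ (input-to-state estimability on the window), plus an offset of size $O(\bar\epsilon)$. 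Absorbing these constants into $c$ and $\beta$ yields \eqref{eq:practical_stability}. When $\bar\epsilon=0$, every additive term vanishes, and the estimate reduces to exponential stability as in the nominal result of \cite{berberich2021}.
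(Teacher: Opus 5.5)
Your route is the paper's route. Lemma~\ref{lem:noise_bound} recasts the data as nominal Koopman outputs plus noise bounded by $\bar\epsilon$. Then $n_z$ serves as an order bound for a minimal realization, hypotheses (a)--(d) place the problem in \cite[Theorem~3]{berberich2021}, and the Lyapunov sandwich is converted into \eqref{eq:practical_stability} with $c=\sqrt{c_2/c_1}$. The paper does this in one stroke: it asserts the ``exact setting'' of that theorem and reads the sandwich directly in the plant state $x_t$, so it never confronts the obstacles you raise. The trouble is that you resolve those obstacles by assertion, and the resolutions fail.

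\textbf{First gap: the reset is a state disturbance.} The per-window re-initialization you correctly need turns the plant--model mismatch into a \emph{state} disturbance in closed loop. The nominal state at $t+n_z$ propagated from $\Phi(x_{t-n_z})$ differs from the one re-initialized at $\Phi(x_t)$. So the nominal extended state $\xi_t$, and with it $W(\xi_t)$, jumps between blocks. \cite[Theorem~3]{berberich2021} treats an exactly LTI plant whose measurements are noisy. Its decrease argument (prediction-error bound, candidate solution at $t+n$) relies on a single LTI trajectory generating the closed-loop outputs. The slack constraint \eqref{eq:mpc_constr} accounts for noise in the Hankel matrices and the initial window, not for such jumps. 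Calling the reset ``exactly the class of bounded disturbance the robust argument absorbs'' therefore skips a genuinely new step: re-deriving the decrease of $V_t=J_L^*+\gamma W(\xi_t)$ with an extra jump term, e.g.\ bounded quadratically and absorbed via Young's inequality.

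\textbf{Second gap: the nominal model is not detectable.} Step~3 and your invariance claim need a detectability property that the nominal model of the generator embedding (the one you use in Step~3) lacks. By Remark~\ref{rem:ctrl_obs}, the unobservable subspace of $(C,A)$ is $\{z_2=z_6=0\}$. On it, $A$ has eigenvalue $1$ along $z_1=\delta$, $z_4$ and $z_5$, so $(C,A)$ is not even detectable. In the nominal model, $P_e=\gamma z_6$ evolves independently of $\delta$, and any $\bar z_1$ with $\bar z_2=0$ is an equilibrium. Hence bounds on the input--output window $\xi_t$, or on the minimal-realization state, give no bound on $\delta-\delta_s$. Two of your claims therefore do not follow:
\begin{itemize}
\item ``invariance of the sublevel set keeps $x$ in $\calX$'', which Lemma~\ref{lem:noise_bound} needs along the closed loop;
\item ``the remaining components of $\bar z$ are controlled via output data and $T_\mathrm{ini}\ge n_z$''.
\end{itemize}
The coupling $\delta\mapsto P_e$ that makes the real plant detectable lives entirely in $e_k$. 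Passing to $\norm{x_k-x_s}_2$ thus needs a nonlinear detectability argument outside the linear certificate.

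\textbf{A related inconsistency.} The observable but uncontrollable mode $z_6$ enters $y^{\mathrm{nom}}$ through $A$, not through $e_k$ as you state. So $y^{\mathrm{nom}}$ is not a trajectory of the controllable, observable minimal realization that \cite{berberich2021} requires. Moving that mode into the noise instead would add a term that does not shrink with $\epsilon_A$ and $c_0$.
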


\begin{proof}
By Lemma~\ref{lem:noise_bound}, the nonlinear I/O data decompose as $(u^d, y^{d,\mathrm{nom}} + \epsilon^d)$ with $\norm{\epsilon^d_k}_\infty \leq \bar\epsilon$, where $y^{d,\mathrm{nom}}$ are trajectories of the nominal Koopman LTI system $G$ defined by $(A, B, C, D)$. As noted in Remark~\ref{rem:ctrl_obs}, the $n_z$-dimensional realization $(A,B,C,D)$ is neither controllable nor observable; however, the I/O transfer function $G(z) = C(zI-A)^{-1}B + D$ admits a minimal realization of order $n_\mathrm{eff} \leq n_z$ that is controllable and observable by construction. Since $n_z$ is a valid upper bound on $n_\mathrm{eff}$, all results of \cite{berberich2021} hold with $n$ replaced by $n_z$ \cite[p.~1703]{berberich2021}. Conditions~(a)--(d) then imply \cite[Assumptions~2 and~4]{berberich2021}, placing the problem in the exact setting of \cite[Theorem~3]{berberich2021}. That theorem yields a Lyapunov function $V_t := J^*_L(t) + \gamma W(\xi_t)$ satisfying $c_1\norm{x_t - x_s}^2_2 \leq V_t \leq c_2\norm{x_t - x_s}^2_2$ \cite[Lemma~1]{berberich2021} that converges exponentially to $V_t \leq \beta(\bar\epsilon)$. Converting: $\norm{x_k - x_s}^2_2 \leq V_k/c_1 \leq (c_2/c_1)\rho^k\norm{x_0-x_s}^2_2 + \beta(\bar\epsilon)/c_1$, yielding \eqref{eq:practical_stability} with $c = \sqrt{c_2/c_1}$ and $\beta$ rescaled by $1/\sqrt{c_1}$.
\end{proof}

\subsubsection*{Application to the Synchronous Generator}

Theorem~\ref{thm:koopman_stability} applies to the generator \eqref{eq:gen_discrete} with $n_z = 7$, $\epsilon_B = 0$, $\epsilon_C = 0$, $\epsilon_A = O(\Delta t)$, and $c_0 = O((\Delta t\,\omega_{\max})^2)$ from Theorem~\ref{thm:gen_koopman}. With $\bar{e} = \epsilon_A\,\mathrm{diam}_z + c_0$, the effective noise level \eqref{eq:epsbar_def} becomes $\bar\epsilon = \norm{C}_2\,\bar{e}\sum_{l=0}^{L-2}\norm{A}^l_2$. While each factor is individually moderate, the product can be large. For the minimum horizon $L = 14$ with typical parameters ($\Delta t = 0.0025$\,s, $\omega_{\max} = 0.02$\,pu, $\mathrm{diam}_z \approx 1$):
\vspace{-1mm}
\[
\bar\epsilon \approx \underbrace{2.5}_{\norm{C}_2}\times\underbrace{0.015}_{\bar{e}}\times\underbrace{43}_{\sum\norm{A}^l_2} \approx 1.6.
\]
\vspace{0mm}
This effective noise level, while $O(\Delta t)$ in scaling, has a large proportionality constant ($\norm{C}_2\sum\norm{A}^l_2 \approx 107$) due to the accumulation of Koopman errors over $L$ steps and the non-normality of $A$ ($\norm{A}_2 = 1.18$ while $\rho(A) = 1$). When $\bar\epsilon \approx 1.6$ is substituted into the stability constants from \cite[Theorem~3]{berberich2021}---which involve the Lyapunov sandwich ratio $c_2/c_1$, the PE constant $c_\mathrm{pe}$, and the regularization parameters---the resulting $\beta(\bar\epsilon)$ yields a state bound of $O(10^3)$\,pu. This is physically meaningless, despite the theorem being qualitatively correct ($\beta(\bar\epsilon) \to 0$ as $\bar\epsilon \to 0$).

The conservatism has two sources. First, the stability proof in~\cite{berberich2021} is designed for generic LTI systems with worst-case noise; it makes no use of the specific structure of the Koopman approximation error. Second, and more fundamentally, the uniform bound $\bar\epsilon$ replaces the \emph{state-dependent} residual $\norm{e^i_j}_2 \leq \epsilon_A\norm{\bar{z}^i_j}_2 + c_0$ by its worst-case value $\bar{e} = \epsilon_A\,\mathrm{diam}_z + c_0$. This discards the proportional structure that is the hallmark of the Koopman embedding: near the equilibrium, the actual residual is $O(c_0) \approx 10^{-8}$, not $O(\bar{e}) \approx 10^{-2}$. The following subsection develops tighter bounds that exploit this structure.

\subsection{Tighter Bounds via Proportional Error Structure}\label{sec:tighter_bounds}

The uniform bound $\bar\epsilon$ in Lemma~\ref{lem:noise_bound} is conservative in two ways: (i) it uses $\norm{A}^l_2$ (spectral norm raised to a power) rather than the tighter $\norm{A^l}_2$ (norm of the matrix power), and (ii) it replaces the state-dependent residual $\norm{\bar{z}^i_j}_2$ by the worst-case $\mathrm{diam}_z$. We now address both.

\begin{lemma}[Tighter Uniform Bound]\label{lem:tighter_uniform}
Under the same conditions as Lemma~\ref{lem:noise_bound}, the per-step output discrepancy satisfies:
\vspace{-2mm}
\begin{equation}\label{eq:eps_k_tight}
\bar\epsilon_k^{\,\mathrm{tight}} := \bar{e}\sum_{l=0}^{k-1}\norm{CA^l}_2 + \epsilon_C\,\mathrm{diam}_z
\end{equation}
\vspace{-4mm}

\noindent with uniform bound $\bar\epsilon^{\,\mathrm{tight}} := \max_{0 \leq k \leq L-1}\bar\epsilon_k^{\,\mathrm{tight}} = \bar{e}\sum_{l=0}^{L-2}\norm{CA^l}_2 + \epsilon_C\,\mathrm{diam}_z$. This satisfies $\bar\epsilon^{\,\mathrm{tight}} \leq \bar\epsilon$, with strict inequality whenever $A$ is non-normal.
\end{lemma}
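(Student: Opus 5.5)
The plan is to reuse the error recursion from the proof of Lemma~\ref{lem:noise_bound}, but to apply the triangle inequality only after multiplying by $C$, instead of splitting $\norm{CA^{l}}_2$ early. First, recall from \eqref{eq:output_noise} that $\epsilon^{d,i}_k = C\Delta^i_k + \eta^i_k$ with $\Delta^i_k = \sum_{j=0}^{k-1}A^{k-1-j}e^i_j$. Hence
\[
C\Delta^i_k = \sum_{j=0}^{k-1} C A^{k-1-j} e^i_j .
\]
Applying the triangle inequality to this sum and submultiplicativity to each term separately gives $\norm{C\Delta^i_k}_2 \le \sum_{j=0}^{k-1}\norm{CA^{k-1-j}}_2\norm{e^i_j}_2$. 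Next, use the same uniform residual bound $\norm{e^i_j}_2\le\bar e$ as before, which relies on $x^i_j\in\calX$ and $u^{d,i}_j\in\calU$. Reindexing with $l=k-1-j$ yields $\norm{C\Delta^i_k}_2\le \bar e\sum_{l=0}^{k-1}\norm{CA^l}_2$. Adding $\norm{\eta^i_k}_2\le\epsilon_C\,\mathrm{diam}_z$ gives \eqref{eq:eps_k_tight}.

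Since every summand is nonnegative, $\bar\epsilon_k^{\,\mathrm{tight}}$ is nondecreasing in $k$. The maximum over $0\le k\le L-1$ is therefore attained at $k=L-1$, which gives the stated closed form with upper summation index $L-2$.

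For the comparison $\bar\epsilon^{\,\mathrm{tight}}\le\bar\epsilon$, I would compare the two sums term by term. Submultiplicativity gives $\norm{CA^l}_2\le\norm{C}_2\norm{A^l}_2\le\norm{C}_2\norm{A}_2^l$ for every $l$. Summing over $l=0,\dots,L-2$, multiplying by $\bar e\ge 0$, and adding the common term $\epsilon_C\,\mathrm{diam}_z$ yields the claim.

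The main obstacle is the strictness assertion. Strict inequality requires $\bar e>0$ and at least one index $l\le L-2$ with $\norm{CA^l}_2<\norm{C}_2\norm{A}_2^l$. Non-normality alone does not force this, for two reasons:
\begin{itemize}[label=(\roman*)]
\item If $L=2$, only the $l=0$ term appears, and there the two sides coincide.
\item Even for $l\ge1$, $C$ could align with the worst directions of $A^l$.
\end{itemize}
I would therefore first attempt the cleanest sufficient route: show $\norm{A^l}_2<\norm{A}_2^l$ for some $l\le L-2$. For normal $A$, $\norm{A^l}_2=\rho(A)^l=\norm{A}_2^l$, so the gap $\norm{A}_2^l-\norm{A^l}_2$ is a measure of non-normality. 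If in addition $\rho(A)<\norm{A}_2$ (as in the generator, where $\rho(A)=1<1.18=\norm{A}_2$), then Gelfand's formula $\norm{A^l}_2^{1/l}\to\rho(A)$ gives $\norm{A^l}_2<\norm{A}_2^l$ for all sufficiently large $l$. Strictness then follows once $L$ is large enough that such an $l$ lies in the range $l\le L-2$.

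I would then state the strictness claim with these explicit qualifications: $\bar e>0$, $\rho(A)<\norm{A}_2$, and $L$ large enough. Finally, I would remark that for the generator it can be verified numerically by evaluating $\sum_{l=0}^{L-2}\norm{CA^l}_2$ directly.
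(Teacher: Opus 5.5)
Your derivation of the per-step bound, the monotonicity argument placing the maximum at $k=L-1$, and the comparison through $\norm{CA^l}_2\le\norm{C}_2\norm{A^l}_2\le\norm{C}_2\norm{A}_2^l$ are exactly the paper's proof. The paper stops at this non-strict chain and never justifies the strictness clause.

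You are right not to accept that clause, and it fails even for large $L$. Take $A=\begin{bmatrix}2&0&0\\0&0&1\\0&0&0\end{bmatrix}$. It is non-normal, since $A^\top A=\mathrm{diag}(4,0,1)\neq\mathrm{diag}(4,1,0)=AA^\top$. Yet with $C=\begin{bmatrix}1&0&0\end{bmatrix}$ one gets $\norm{CA^l}_2=2^l=\norm{C}_2\norm{A}_2^l$ for every $l$. Hence $\bar\epsilon^{\,\mathrm{tight}}=\bar\epsilon$ for all $L$ and all $\bar e$.

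Your replacement hypothesis is the right one. Since $\rho(A)^l\le\norm{A^l}_2\le\norm{A}_2^l$, one has $\norm{A^l}_2<\norm{A}_2^l$ for some $l\ge 1$ if and only if $\rho(A)<\norm{A}_2$. The "if" direction is Gelfand's formula; for "only if", $\rho(A)=\norm{A}_2$ forces equality for every $l$. So the route through $\norm{A^l}_2$ fails exactly when $\rho(A)=\norm{A}_2$, and non-normal matrices such as the one above can satisfy this.

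The only qualification missing from your list is $C\neq 0$, since otherwise both sums vanish. With $\bar e>0$, $C\neq 0$, $\rho(A)<\norm{A}_2$ and $L$ large enough, your strictness argument is complete. It covers the generator, where the paper reports $\rho(A)=1<1.18=\norm{A}_2$.
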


\begin{proof}
From \eqref{eq:delta_norm} and \eqref{eq:output_noise}:
\begin{align*}
\norm{C\Delta^i_k}_2 &= \norm{\sum_{j=0}^{k-1}CA^{k-1-j}\,e^i_j}_2 \\
&\leq \sum_{j=0}^{k-1}\norm{CA^{k-1-j}}_2\norm{e^i_j}_2 \leq \bar{e}\sum_{l=0}^{k-1}\norm{CA^l}_2.
\end{align*}
\vspace{-3mm}

\noindent The improvement over Lemma~\ref{lem:noise_bound} comes from bounding $\norm{CA^l}_2$ directly rather than splitting $\norm{C}_2\norm{A}^l_2 \geq \norm{C}_2\norm{A^l}_2 \geq \norm{CA^l}_2$.
\end{proof}

\vspace{1mm}
For the generator, $A$ is non-normal ($\norm{A}_2 = 1.18$ while $\rho(A) = 1$).
The actual $\norm{A^k}_2$ grows polynomially (linearly in $k$), not exponentially as $\norm{A}^k_2$ would suggest. For $L = 14$, $\norm{C}_2\sum_{l=0}^{12}\norm{A}^l_2 \approx 107$ while $\sum_{l=0}^{12}\norm{CA^l}_2 \approx 49$, a factor of $\sim 2$; but for $L = 50$, the ratio exceeds $130\times$, making Lemma~\ref{lem:tighter_uniform} essential for longer horizons.
The more significant improvement comes from preserving the proportional error structure through the entire bound.

\begin{lemma}[State-Dependent Error Bound]\label{lem:state_dep}
Under the same conditions as Lemma~\ref{lem:noise_bound}, the output discrepancy along the $i$-th trajectory satisfies the \textbf{state-dependent} bound:
\vspace{-2mm}
\begin{align}\label{eq:state_dep_bound}
\norm{\epsilon^{d,i}_k}_2 \leq \sum_{l=0}^{k-1}&\norm{CA^l}_2\Big(\epsilon_A\norm{\bar{z}^i_{k-1-l}}_2 + \epsilon_B\norm{\bar{u}^i_{k-1-l}}_2 + c_0\Big) \nonumber \\
&+ \epsilon_C\norm{\bar{z}^i_k}_2.
\end{align}
In particular, if $\epsilon_B = \epsilon_C = 0$ (as for the generator), this simplifies to:
\begin{align}\label{eq:state_dep_gen}
\norm{\epsilon^{d,i}_k}_2 \leq \epsilon_A\underbrace{\sum_{l=0}^{k-1}\norm{CA^l}_2\norm{\bar{z}^i_{k-1-l}}_2}_{\text{proportional: vanishes at equilibrium}} + c_0\underbrace{\sum_{l=0}^{k-1}\norm{CA^l}_2}_{\text{offset: persists}}.
\end{align}
The proportional term depends on the actual trajectory deviation $\norm{\bar{z}^i_j}_2$, not the worst-case $\mathrm{diam}_z$ as in Lemma~\ref{lem:noise_bound}. Near equilibrium ($\norm{\bar{z}^i_j}_2 \to 0$), the effective noise reduces to $c_0\sum_{l=0}^{k-1}\norm{CA^l}_2$, which is $O(c_0)$---orders of magnitude smaller than the uniform bound $\bar\epsilon = O(\epsilon_A\,\mathrm{diam}_z)$.
\end{lemma}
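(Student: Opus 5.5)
The plan is to reuse the error-propagation identity from the proof of Lemma~\ref{lem:noise_bound}, stopping before the worst-case substitution $\norm{e^i_j}_2 \le \bar e$. By \eqref{eq:output_noise}, $\epsilon^{d,i}_k = C\Delta^i_k + \eta^i_k$. The recursion $\Delta^i_{k+1} = A\Delta^i_k + e^i_k$ with $\Delta^i_0=0$ gives $\Delta^i_k = \sum_{j=0}^{k-1}A^{k-1-j}e^i_j$. Reindexing with $l = k-1-j$ yields
\[
C\Delta^i_k = \sum_{l=0}^{k-1} CA^l\, e^i_{k-1-l}.
\]
Exactly as in the proof of Lemma~\ref{lem:tighter_uniform}, I will then apply the triangle inequality and submultiplicativity to the whole product $CA^l$. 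This gives $\norm{C\Delta^i_k}_2 \le \sum_{l=0}^{k-1}\norm{CA^l}_2\norm{e^i_{k-1-l}}_2$.

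Next, I will bound each residual pointwise along the trajectory, instead of uniformly. Since $x^i_j\in\calX$ and $u^{d,i}_j\in\calU$, Definition~\ref{def:approx_koopman} applies at every step, so
\[
\norm{e^i_{k-1-l}}_2 \le \epsilon_A\norm{\bar z^i_{k-1-l}}_2 + \epsilon_B\norm{\bar u^i_{k-1-l}}_2 + c_0 .
\]
Substituting this term by term is legitimate because the weights $\norm{CA^l}_2$ are nonnegative. The output residual satisfies $\norm{\eta^i_k}_2\le \epsilon_C\norm{\bar z^i_k}_2$ by \eqref{eq:output_bound}. One more triangle inequality then gives \eqref{eq:state_dep_bound}.

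The specialization \eqref{eq:state_dep_gen} follows by setting $\epsilon_B=\epsilon_C=0$, which holds for the generator by Theorem~\ref{thm:gen_koopman}, and splitting the sum linearly into its $\epsilon_A$ and $c_0$ parts.

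For the closing remark, I will let $\norm{\bar z^i_j}_2\to 0$ for all $j\le k-1$ in \eqref{eq:state_dep_gen}. The proportional sum vanishes, because it is a finite sum with fixed coefficients. What remains is $c_0\sum_{l=0}^{k-1}\norm{CA^l}_2$, which is $O(c_0)$ for fixed $k\le L-1$. Comparing this with $\bar\epsilon = O(\epsilon_A\,\mathrm{diam}_z)$ from Lemma~\ref{lem:noise_bound} gives the claimed gap.

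There is no genuine obstacle here. The only point requiring care is bookkeeping: the index shift $j\leftrightarrow k-1-l$ must be done consistently, so that the weight $\norm{CA^l}_2$ is paired with the residual at time $k-1-l$ and not at time $l$. I also need to make sure the deviation variables used in the residual bound are the centered quantities $\bar z^i_j=\Phi(x^i_j)-\Phi(x_s)$. The uncentered $\Phi(x^i_j)$ must not appear, since $\Delta^i_k$ is a difference in which the equilibrium terms cancel.
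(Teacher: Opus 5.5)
Your proposal is correct and follows the paper's own proof. You reuse the recursion $\Delta^i_{k+1}=A\Delta^i_k+e^i_k$ from Lemma~\ref{lem:noise_bound} and bound $\norm{C\Delta^i_k}_2$ by $\sum_{l}\norm{CA^l}_2\norm{e^i_{k-1-l}}_2$ as in Lemma~\ref{lem:tighter_uniform}. You then insert the pointwise residual bound of Definition~\ref{def:approx_koopman} in place of $\bar e$ and add $\norm{\eta^i_k}_2\le\epsilon_C\norm{\bar z^i_k}_2$ by the triangle inequality.

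Your caveat about centering is well placed, because the paper's proof inherits the same step. The recursion is exact when the nominal trajectory is propagated in the deviation coordinates of Definition~\ref{def:approx_koopman}. The uncentered nominal recursion literally written in Lemma~\ref{lem:noise_bound} would additionally require $\Phi(x_s)=A\Phi(x_s)+Bu_s$ and $y_s=C\Phi(x_s)+Du_s$.
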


\begin{proof}
The proof follows Lemma~\ref{lem:noise_bound} up to \eqref{eq:delta_norm}, but retains the state-dependent residual bound. From \eqref{eq:output_noise}:
\begin{align*}
\norm{\epsilon^{d,i}_k}_2 &\leq \norm{C\Delta^i_k}_2 + \norm{\eta^i_k}_2 \\
&= \norm{\sum_{j=0}^{k-1}CA^{k-1-j}\,e^i_j}_2 + \epsilon_C\norm{\bar{z}^i_k}_2\\
&\leq \sum_{j=0}^{k-1}\norm{CA^{k-1-j}}_2\norm{e^i_j}_2 + \epsilon_C\norm{\bar{z}^i_k}_2.
\end{align*}
Substituting $\norm{e^i_j}_2 \leq \epsilon_A\norm{\bar{z}^i_j}_2 + \epsilon_B\norm{\bar{u}^i_j}_2 + c_0$ from Definition~\ref{def:approx_koopman} and re-indexing with $l = k-1-j$ gives \eqref{eq:state_dep_bound}.
\end{proof}

\begin{theorem}[Tighter Stability via Proportional Error]\label{thm:tight_stability}
Consider the setting of Theorem~\ref{thm:koopman_stability}, and suppose additionally that $\epsilon_B = \epsilon_C = 0$ (as holds for the generator by Theorem~\ref{thm:gen_koopman}). Define:
\begin{equation}\label{eq:S_L}
S_L := \sum_{l=0}^{L-2}\norm{CA^l}_2
\end{equation}
and the offset-only noise level $\bar\epsilon_0 := c_0\,S_L$. Then:
\begin{enumerate}[label=(\alph*), leftmargin=1.5em]
\item \textbf{(Self-consistent bound)} For any $r > 0$, if the closed-loop state satisfies $\norm{\bar{z}_k}_2 \leq r$ for all $k$ in the offline and online trajectories, the effective noise level can be tightened from $\bar\epsilon$ to:
\begin{equation}\label{eq:eps_r}
\bar\epsilon(r) := \epsilon_A\,S_L\,r + \bar\epsilon_0.
\end{equation}
\item \textbf{(Fixed-point argument)} Provided $\epsilon_A\,S_L$ is sufficiently small, the bound \eqref{eq:practical_stability} from Theorem~\ref{thm:koopman_stability} holds with $\bar\epsilon$ replaced by $\bar\epsilon(r^*)$, where $r^*$ is the smallest positive solution of the fixed-point equation:
\begin{equation}\label{eq:fixed_point}
r = c\,\norm{x_0 - x_s}_2 + \beta(\bar\epsilon(r)),
\end{equation}
with $c, \beta$ from Theorem~\ref{thm:koopman_stability}. Since $\beta$ is continuous with $\beta(0) = 0$ and $\bar\epsilon(r) \to \bar\epsilon_0 = O(c_0)$ as $r \to 0$, the fixed-point $r^*$ satisfies $r^* \leq c\norm{x_0-x_s}_2 + \beta(\bar\epsilon_0)$, yielding:
\begin{equation}\label{eq:tight_tracking}
\limsup_{k\to\infty}\norm{x_k - x_s}_2 \leq \beta\big(c_0\,S_L\big)
\end{equation}
which depends on $c_0$ (the offset) rather than $\epsilon_A\,\mathrm{diam}_z$ (the worst-case proportional error).
\end{enumerate}
\end{theorem}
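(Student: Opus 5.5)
Part (a) follows by specializing Lemma~\ref{lem:state_dep}. With $\epsilon_B=\epsilon_C=0$, bound \eqref{eq:state_dep_gen} gives, for every offline trajectory and every online window of past measurements,
\[
\norm{\epsilon^{d,i}_k}_2 \leq \epsilon_A\sum_{l=0}^{k-1}\norm{CA^l}_2\norm{\bar z^i_{k-1-l}}_2 + c_0\sum_{l=0}^{k-1}\norm{CA^l}_2 .
\]
Under the hypothesis $\norm{\bar z_j}_2\le r$, each factor $\norm{\bar z^i_{k-1-l}}_2$ is at most $r$. Both sums are nondecreasing in $k$, so they are bounded by $S_L$ for $k\le L-1$. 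This gives $\norm{\epsilon^{d,i}_k}_2\le \epsilon_A S_L r+c_0S_L=\bar\epsilon(r)$. Since the $\infty$-norm is dominated by the $2$-norm, the data decomposition used in the proof of Theorem~\ref{thm:koopman_stability} holds with noise level $\bar\epsilon(r)$. The online initial-condition windows of length $n_z\le L$ are covered by the same computation.

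For part (b), I would run Theorem~\ref{thm:koopman_stability} in the invariance form. Its conclusion depends on the noise only through the level entering \eqref{eq:mpc_constr} and the robustness analysis of \cite[Theorem~3]{berberich2021}. So if the trajectory stays in the ball $\{\norm{\bar z}_2\le r\}$, the estimate \eqref{eq:practical_stability} holds with $\beta(\bar\epsilon(r))$.

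To close the loop I need to relate $\norm{\bar z_k}_2$ to $\norm{x_k-x_s}_2$. $\Phi$ is smooth on compact $\calX$, hence Lipschitz with some constant $L_\Phi$. I would absorb $L_\Phi$ into $c$ and $\beta$, i.e. interpret \eqref{eq:fixed_point} in lifted coordinates; this is a point I would state explicitly. Define
\[
F(r):=c\norm{x_0-x_s}_2+\beta(\epsilon_A S_L r+\bar\epsilon_0).
\]
If $\beta$ is locally Lipschitz near the relevant noise levels with constant $L_\beta$, and $\epsilon_A S_L L_\beta<1$ (this is the ``sufficiently small'' condition), then $F$ is a contraction. It therefore has a unique fixed point $r^*$, and $F(r)\le r$ exactly for $r\ge r^*$.

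Invariance then follows by induction over the $n_z$-step MPC cycles. Suppose all past lifted states, offline and online, lie in the ball of radius $r^*$. Part (a) applied with $r=r^*$ shows the noise level is at most $\bar\epsilon(r^*)$. Then \eqref{eq:practical_stability} bounds the next states by $F(r^*)=r^*$. The offline data are fixed; I would assume they were collected within radius $r^*$, or else take $r$ to be the maximum of the offline radius and the online one.

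For the asymptotic claim \eqref{eq:tight_tracking}, I would iterate the argument with a shrinking radius. Once $c\rho^k\norm{x_0-x_s}$ has decayed, restart the estimate from a later time. This gives $\limsup\norm{x_k}\le r_\infty$, where $r_\infty=\beta(\epsilon_AS_Lr_\infty+\bar\epsilon_0)$ is the fixed point of $F$ with initial term $0$. Under the contraction condition, $r_\infty\le \beta(\bar\epsilon_0)/(1-\epsilon_AS_LL_\beta)$, which equals $\beta(c_0S_L)$ up to that factor. To get \eqref{eq:tight_tracking} exactly I would redefine $\beta$ to absorb the factor.

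The main obstacle is this restart step. The offline Hankel data are collected once and do not shrink, so their noise remains at the larger level $\bar\epsilon(r_{\rm offline})$. Only the online initial-condition noise shrinks as the state converges. I would first try to argue that the offline data can be gathered near equilibrium, with small excitation amplitude. The cost is that $c_{\rm pe}$ grows as the amplitude shrinks, so hypothesis~(d) constrains how small the offline radius can be.
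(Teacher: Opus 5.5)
Your part (a) is the paper's argument and is fine. The gap is the last step of (b). You locate it yourself, but it cannot be closed along your route.

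\textbf{Why the restart fails.} The restart cannot lower the noise level that enters $\beta$. Theorem~\ref{thm:koopman_stability} (through \cite[Theorem~3]{berberich2021}) needs a single level that also bounds the offline Hankel data. Those data were recorded once, at some radius $r_{\mathrm{off}}>0$, which persistency of excitation forces. So the level you can certify for them stays at $\epsilon_A S_L r_{\mathrm{off}}+\bar\epsilon_0$ however the closed loop evolves. In addition, $\bar\epsilon$ is a design parameter of the controller through \eqref{eq:mpc_constr}, so restarting at a smaller level would mean retuning the scheme online.

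\textbf{Why the fixes do not work.} Exciting closer to equilibrium fails quantitatively. Under amplitude scaling, $c_{\mathrm{pe}}$ grows roughly like $r_{\mathrm{off}}^{-2}$, while $\bar\epsilon(r_{\mathrm{off}})\ge\epsilon_A S_L r_{\mathrm{off}}$. Hence $c_{\mathrm{pe}}\,\bar\epsilon(r_{\mathrm{off}})$ grows like $\epsilon_A S_L/r_{\mathrm{off}}$, and hypothesis~(d) is lost. Even if the level did shrink, the limiting radius $r_\infty=\beta(\epsilon_A S_L r_\infty+\bar\epsilon_0)$ is at least $\beta(c_0S_L)$. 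It is strictly larger when $\beta$ is strictly increasing and $c_0,\epsilon_AS_L>0$. ``Redefining $\beta$'' therefore proves a different inequality than \eqref{eq:tight_tracking}, which is stated with the $\beta$ of Theorem~\ref{thm:koopman_stability}. What your argument actually yields is $\limsup_{k\to\infty}\norm{x_k-x_s}_2\le\beta(\bar\epsilon(r))$ with $r$ at least the offline radius. In that bound $\epsilon_A$ does not drop out.

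\textbf{The paper does not close this gap either.} Its proof ends with $\limsup_{k\to\infty}\norm{x_k-x_s}_2\le\beta(\bar\epsilon(r^*))\le\beta(\bar\epsilon_0)$. But $\bar\epsilon(r^*)=\epsilon_AS_Lr^*+\bar\epsilon_0\ge\bar\epsilon_0$, and the same proof asserts that $\beta$ is increasing, so the second inequality runs the wrong way. For the same reason the statement's claim $r^*\le c\norm{x_0-x_s}_2+\beta(\bar\epsilon_0)$ is reversed, since $r^*=c\norm{x_0-x_s}_2+\beta(\bar\epsilon(r^*))$. The paper also fixes $r^*$ from $x_0$ alone, although part~(a) needs $r$ to cover the offline trajectories. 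And it equates a bound on $\norm{x_k-x_s}_2$ with one on $\norm{\bar z_k}_2$, omitting the Lipschitz factor of $\Phi$ that you insert.

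On each of these points your proposal is more careful than the paper's proof. The same holds for your replacement of the paper's ``slope for large $r$'' existence argument by a contraction plus an induction over MPC cycles. Still, a tracking bound in terms of $c_0$ alone would need a new argument inside the proof of \cite[Theorem~3]{berberich2021}. It cannot come from Theorem~\ref{thm:koopman_stability} used as a black box at one noise level.
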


\begin{proof}
\textbf{Part (a).} If $\norm{\bar{z}^i_j}_2 \leq r$ for all trajectory steps, Lemma~\ref{lem:state_dep} with $\epsilon_B = \epsilon_C = 0$ gives $\norm{\epsilon^{d,i}_k}_2 \leq \epsilon_A\,S_L\,r + c_0\,S_L = \bar\epsilon(r)$. This is a valid uniform noise bound, so Theorem~\ref{thm:koopman_stability} applies with $\bar\epsilon$ replaced by $\bar\epsilon(r)$.

\noindent \textbf{Part (b).} The bound \eqref{eq:practical_stability} with noise level $\bar\epsilon(r)$ gives $\norm{x_k - x_s}_2 \leq c\,\rho^k\norm{x_0 - x_s}_2 + \beta(\bar\epsilon(r))$. In particular, $\limsup_{k\to\infty}\norm{x_k-x_s}_2 \leq \beta(\bar\epsilon(r))$. The self-consistency requirement is that $r$ must be large enough to contain the entire trajectory, i.e., $r \geq c\norm{x_0-x_s}_2 + \beta(\bar\epsilon(r))$. The smallest such $r$ is the fixed point $r^*$ of \eqref{eq:fixed_point}. Since $\beta(\bar\epsilon(r))$ is continuous and increasing in $r$ with $\beta(\bar\epsilon(0)) = \beta(\bar\epsilon_0)$ finite, and the left side $r$ grows linearly, a fixed point exists provided the slope $\beta'(\bar\epsilon(r))\cdot\epsilon_A S_L < 1$ for large $r$, which holds when $\epsilon_A S_L$ is sufficiently small. The asymptotic tracking error is $\limsup \norm{x_k-x_s}_2 \leq \beta(\bar\epsilon(r^*)) \leq \beta(\bar\epsilon_0) = \beta(c_0\,S_L)$.
\end{proof}

\begin{remark}[Improvement over Theorem~\ref{thm:koopman_stability}]
Theorem~\ref{thm:tight_stability} gives an asymptotic tracking error of $\beta(c_0\,S_L)$ compared to $\beta(\bar\epsilon)$ from Theorem~\ref{thm:koopman_stability}, where $\bar\epsilon = \norm{C}_2\,\bar{e}\sum_{l=0}^{L-2}\norm{A}^l_2$ from \eqref{eq:epsbar_def}. The improvement has two sources: (i) the sum $S_L = \sum\norm{CA^l}_2$ is tighter than $\norm{C}_2\sum\norm{A}^l_2$ (Lemma~\ref{lem:tighter_uniform}), and (ii) the proportional error structure replaces $\bar{e} = \epsilon_A\,\mathrm{diam}_z + c_0$ by $c_0$ alone. The second effect dominates: for the generator with $\epsilon_A \approx 0.015$, $\mathrm{diam}_z \approx 1$, $c_0 \approx 10^{-8}$:
\[
\frac{c_0}{\bar{e}} = \frac{c_0}{\epsilon_A\,\mathrm{diam}_z + c_0} \approx \frac{10^{-8}}{0.015} \approx 7\times 10^{-7}.
\]
The effective noise at the equilibrium is six orders of magnitude smaller than the worst-case noise. Since $\beta$ scales as $O(\bar\epsilon^2)$ for small $\bar\epsilon$ \cite{berberich2021}, the tracking error improves by a factor of $O(10^{-12})$. The proportional error structure effectively eliminates $\epsilon_A$ from the asymptotic bound, leaving only the irreducible offset $c_0$.
\end{remark}




\vspace{-3mm}
\section{Conclusion}\label{sec:conclusion}
We have established rigorous conditions under which data-driven MPC, applied directly to input-output data from a nonlinear system, yields practical exponential stability. The key insight is that an approximate Koopman linear embedding serves as a purely analytical certificate: the controller itself never uses the Koopman model, but the existence of the embedding guarantees that the nonlinear data can be interpreted as noisy LTI data, enabling the robust stability theory of \cite{berberich2021} to be applied. While the uniform noise bound $\bar\epsilon$ can be conservative, the proportional error structure of the Koopman embedding (Theorem~\ref{thm:tight_stability}) recovers an asymptotic tracking error that depends only on the irreducible offset $c_0$, which is negligible for the synchronous generator at practical sampling rates. 

\vspace{-3mm}
\appendices
\section{Proof of Theorem~\ref{thm:gen_koopman}}\label{app:koopman_proof}

We denote the equilibrium lifted state by $z_s = \Phi(x_s)$ and the deviation $\bar{z}_k := z_k - z_s$, $\bar{u}_k := u_k - u_s$. We construct a matrix $A \in \R^{7\times 7}$ and vector $B \in \R^{7\times 1}$ such that $\bar{z}_{k+1} = A\bar{z}_k + B\bar{u}_k + e_k$, where $e_k$ is the residual satisfying the proportional bound. The constant affine terms (such as $\alpha_q E_{fd}$ in the flux-decay equation) cancel by subtraction of the equilibrium relation. Define the shorthand $\theta_k := \Delta t\,\tilde\omega_k$, $\alpha_d := \Delta t/M$, $\alpha_q := \Delta t/T'_{d0}$, $\gamma := V_\infty/X_\Sigma$, $\kappa := (X_d+X_e)/X_\Sigma$, $\mu := (X_d-X'_d)V_\infty/X_\Sigma$.

\noindent\textbf{Components 1--3: Exact Linear Propagation.}
From the Euler discretization \eqref{eq:disc1}--\eqref{eq:disc3}, the absolute dynamics are:
\begin{align}
z_{1,k+1} &= z_{1,k} + \Delta t\, z_{2,k} \label{eq:z1}\\
z_{2,k+1} &= (1 - \alpha_d D)\,z_{2,k} - \alpha_d\gamma\, z_{6,k} + \alpha_d\, u_k \label{eq:z2}\\
z_{3,k+1} &= (1 - \alpha_q\kappa)\,z_{3,k} + \alpha_q\mu\, z_{7,k} + \alpha_q E_{fd}. \label{eq:z3}
\end{align}
At equilibrium, the same equations hold with $z_k = z_s$, $u_k = u_s$, and $z_{k+1} = z_s$. Subtracting:
\begin{align}
\bar{z}_{1,k+1} &= \bar{z}_{1,k} + \Delta t\, \bar{z}_{2,k} \label{eq:z1dev}\\
\bar{z}_{2,k+1} &= (1 - \alpha_d D)\,\bar{z}_{2,k} - \alpha_d\gamma\, \bar{z}_{6,k} + \alpha_d\, \bar{u}_k \label{eq:z2dev}\\
\bar{z}_{3,k+1} &= (1 - \alpha_q\kappa)\,\bar{z}_{3,k} + \alpha_q\mu\, \bar{z}_{7,k}. \label{eq:z3dev}
\end{align}
Note that the constant $\alpha_q E_{fd}$ in \eqref{eq:z3} cancels exactly in \eqref{eq:z3dev}. All three deviation equations are exactly linear in $\bar{z}_k$ and $\bar{u}_k$ with zero residual: $e_{i,k} = 0$ for $i=1,2,3$.

\noindent\textbf{Component 4 ($\sin\delta$).}
By the angle addition formula:
\begin{align}
z_{4,k+1} &= \sin\delta_{k+1} = \sin(\delta_k + \theta_k) \notag\\
&= z_{4,k}\cos\theta_k + z_{5,k}\sin\theta_k. \label{eq:z4_exact}
\end{align}
Since $|\theta_k| \leq \Delta t\,\omega_{\max} =: \bar\theta \ll 1$ by Assumption~\ref{ass:operating}, we apply Taylor's theorem with explicit Lagrange remainders:
\begin{equation}\label{eq:taylor}
\cos\theta_k = 1 - \tfrac{\theta_k^2}{2}\xi_{c,k},\quad \sin\theta_k = \theta_k - \tfrac{\theta_k^3}{6}\xi_{s,k}
\end{equation}
where $\xi_{c,k}, \xi_{s,k} \in [0,1]$. Substituting \eqref{eq:taylor} into \eqref{eq:z4_exact}:
\begin{equation}\label{eq:z4_expanded}
z_{4,k+1} = z_{4,k} + \Delta t\,z_{2,k}\,z_{5,k} + r_{4,k}
\end{equation}
where the higher-order remainder is
\begin{equation}\label{eq:r4}
r_{4,k} = -\tfrac{\theta_k^2}{2}\xi_{c,k}\,z_{4,k} - \tfrac{\theta_k^3}{6}\xi_{s,k}\,z_{5,k}.
\end{equation}
The bilinear term $\Delta t\,z_{2,k}\,z_{5,k}$ is not linear in $z_k$. We absorb it into the residual by writing $z_{4,k+1} = z_{4,k} + e_{4,k}$ with
\begin{equation}\label{eq:e4_def}
e_{4,k} := \Delta t\,z_{2,k}\,z_{5,k} + r_{4,k}.
\end{equation}
Bounding each term: since $|z_{5,k}| = |\cos\delta_k| \leq 1$ and $z_{2,s} = \tilde\omega_s = 0$,
\begin{equation}\label{eq:e4_bound1}
|\Delta t\,z_{2,k}\,z_{5,k}| \leq \Delta t\,|z_{2,k} - z_{2,s}| \leq \Delta t\,\norm{\bar{z}_k}_2.
\end{equation}
For the remainder \eqref{eq:r4}, using $|\theta_k|^2 \leq \bar\theta^2$ and $|z_{4,k}|, |z_{5,k}| \leq 1$:
\begin{equation}\label{eq:r4_bound}
|r_{4,k}| \leq \tfrac{\bar\theta^2}{2} + \tfrac{\bar\theta^3}{6} \leq \bar\theta^2.
\end{equation}
Combining \eqref{eq:e4_bound1} and \eqref{eq:r4_bound}:
\begin{equation}\label{eq:e4_final}
|e_{4,k}| \leq \Delta t\,\norm{\bar{z}_k}_2 + \bar\theta^2.
\end{equation}

\noindent\textbf{Component 5 ($\cos\delta$).}
By the angle addition expansion:
\begin{align}
z_{5,k+1} &= \cos(\delta_k + \theta_k) = z_{5,k}\cos\theta_k - z_{4,k}\sin\theta_k \notag\\
&= z_{5,k} - \Delta t\,z_{2,k}\,z_{4,k} + r_{5,k} \label{eq:z5_expanded}
\end{align}
where $r_{5,k} = -\frac{\theta_k^2}{2}\xi_{c,k}\,z_{5,k} + \frac{\theta_k^3}{6}\xi_{s,k}\,z_{4,k}$. Defining $e_{5,k} := -\Delta t\,z_{2,k}\,z_{4,k} + r_{5,k}$ and bounding identically to component~4:
\begin{equation}\label{eq:e5_bound}
|e_{5,k}| \leq \Delta t\,\norm{\bar{z}_k}_2 + \bar\theta^2.
\end{equation}

\noindent\textbf{Component 6 ($E'_q\sin\delta$).}
By the product rule:
\begin{equation}\label{eq:z6_exact}
z_{6,k+1} = E'_{q,k+1}\sin\delta_{k+1} = z_{3,k+1}\cdot z_{4,k+1}.
\end{equation}
Substituting $z_{3,k+1}$ from \eqref{eq:z3} and $z_{4,k+1} = z_{4,k} + e_{4,k}$:
\begin{align}
&z_{6,k+1} = \big[(1{-}\alpha_q\kappa)z_{3,k} + \alpha_q\mu\,z_{7,k} + \alpha_q E_{fd}\big] \times \big[z_{4,k} + e_{4,k}\big] \notag\\
&= (1{-}\alpha_q\kappa)\,z_{6,k} + \alpha_q\mu\,z_{7,k}\,z_{4,k} + \alpha_q E_{fd}\,z_{4,k} + z_{3,k+1}\,e_{4,k}. \label{eq:z6_expand}
\end{align}
The first term $(1{-}\alpha_q\kappa)\,z_{6,k}$ is linear in $z_k$. The remaining terms are nonlinear. For the linear embedding, we assign the linear part to the $A$-matrix and collect all nonlinear and constant deviations into the residual:
\begin{align}
e_{6,k} &:= \alpha_q\mu\big(z_{7,k}\,z_{4,k} - z_{7,s}\,z_{4,s}\big) \notag\\
&\quad + \alpha_q E_{fd}(z_{4,k} - z_{4,s}) + z_{3,k+1}\,e_{4,k}. \label{eq:e6_def}
\end{align}
To bound the bilinear term, we add and subtract $z_{7,s}\,z_{4,k}$:
\begin{align}
&\big|z_{7,k}\,z_{4,k} - z_{7,s}\,z_{4,s}\big| = \big|(z_{7,k}{-}z_{7,s})\,z_{4,k} + z_{7,s}\,(z_{4,k}{-}z_{4,s})\big| \notag\\
&\quad \leq |z_{4,k}|\,|\bar{z}_{7,k}| + |z_{7,s}|\,|\bar{z}_{4,k}|  \leq (1 + E'_{q,\max})\,\norm{\bar{z}_k}_2 \label{eq:e6_bilinear}
\end{align}
where we used $|z_{4,k}| = |\sin\delta_k| \leq 1$, $|z_{7,s}| = |E'_{q,s}\cos\delta_s| \leq E'_{q,\max}$, and $|\bar{z}_{i,k}| \leq \norm{\bar{z}_k}_2$. The remaining terms satisfy $|\alpha_q E_{fd}\,\bar{z}_{4,k}| \leq \alpha_q E_{fd}\,\norm{\bar{z}_k}_2$ and $|z_{3,k+1}\,e_{4,k}| \leq E'_{q,\max}(\Delta t\,\norm{\bar{z}_k}_2 + \bar\theta^2)$, where we bounded $|z_{3,k+1}| \leq E'_{q,\max}$. Combining:
\begin{equation}\label{eq:e6_bound}
|e_{6,k}| \leq c_6\,\norm{\bar{z}_k}_2 + c_6'\,\bar\theta^2
\end{equation}
with $c_6 := \alpha_q\mu(1{+}E'_{q,\max}) + \alpha_q E_{fd} + E'_{q,\max}\Delta t$ and $c_6' := E'_{q,\max}$.

\noindent\textbf{Component 7 ($E'_q\cos\delta$).}
By the identical product-rule argument applied to $z_{7,k+1} = z_{3,k+1}\cdot z_{5,k+1}$, with $z_{5,k+1} = z_{5,k} + e_{5,k}$:
\vspace{-2mm}
\begin{align}
z_{7,k+1} &= (1{-}\alpha_q\kappa)\,z_{7,k} + \alpha_q\mu\,z_{7,k}\,z_{5,k} \notag\\
&\quad + \alpha_q E_{fd}\,z_{5,k} + z_{3,k+1}\,e_{5,k}. \label{eq:z7_expand}
\end{align}
The residual satisfies the analogous bound:
\begin{equation}\label{eq:e7_bound}
|e_{7,k}| \leq c_7\,\norm{\bar{z}_k}_2 + c_7'\,\bar\theta^2
\end{equation}
with $c_7 := \alpha_q\mu(1{+}E'_{q,\max}) + \alpha_q E_{fd} + E'_{q,\max}\Delta t$ and $c_7' := E'_{q,\max}$.

\noindent\textbf{System Matrices and Aggregate Error Bound.}

Collecting results, the system matrices in the deviation-coordinate embedding $\bar{z}_{k+1} = A\bar{z}_k + B\bar{u}_k + e_k$ are:
\begin{equation}\label{eq:A_matrix}
A = \begin{bmatrix}
1 & \Delta t & 0 & 0 & 0 & 0 & 0\\
0 & 1{-}\alpha_d D & 0 & 0 & 0 & {-}\alpha_d\gamma & 0\\
0 & 0 & 1{-}\alpha_q\kappa & 0 & 0 & 0 & \alpha_q\mu\\
0 & 0 & 0 & 1 & 0 & 0 & 0\\
0 & 0 & 0 & 0 & 1 & 0 & 0\\
0 & 0 & 0 & 0 & 0 & 1{-}\alpha_q\kappa & 0\\
0 & 0 & 0 & 0 & 0 & 0 & 1{-}\alpha_q\kappa
\end{bmatrix},
\end{equation}
\begin{equation}\label{eq:B_vec}
B = \begin{bmatrix}0 & \alpha_d & 0 & 0 & 0 & 0 & 0\end{bmatrix}^\top.
\end{equation}
Since $y_k = (z_{2,k},\, \gamma\,z_{6,k})^\top = (\tilde\omega_k,\, P_{e,k})^\top$:
\begin{equation}\label{eq:C_matrix}
C = \begin{bmatrix}0 & 1 & 0 & 0 & 0 & 0 & 0\\ 0 & 0 & 0 & 0 & 0 & \gamma & 0\end{bmatrix},\quad D = 0.
\end{equation}
\vspace{0mm}
Rows 1--3 follow from \eqref{eq:z1dev}--\eqref{eq:z3dev} (exact, no residual). Rows 4--7 have the identity as their linear part (since the leading term in each is $z_{i,k}$), with all remaining terms collected into the residuals $e_{4,k}, \ldots, e_{7,k}$. The error vector $e_k = (0, 0, 0, e_{4,k}, e_{5,k}, e_{6,k}, e_{7,k})^\top$ satisfies:
\begin{align}
\norm{e_k}_2 &\leq \sum_{i=4}^{7}|e_{i,k}| \leq (c_4{+}c_5{+}c_6{+}c_7)\,\norm{\bar{z}_k}_2 \notag\\
&\quad + (c_4'{+}c_5'{+}c_6'{+}c_7')\,\bar\theta^2 \label{eq:total_error}
\end{align}
where $c_4 = c_5 = \Delta t$, $c_4' = c_5' = 1$, and $c_6, c_7, c_6', c_7'$ are defined above. Since $\bar\theta = \Delta t\,\omega_{\max}$, the aggregate bound is:
\begin{equation}\label{eq:eps_A_final}
\norm{e_k}_2 \leq \epsilon_A\,\norm{\bar{z}_k}_2 + c_0
\end{equation}
with
\begin{equation}\label{eq:eps_A_value}
\epsilon_A := 2\Delta t + c_6 + c_7 = O(\Delta t)
\end{equation}
and offset $c_0 := (c_4' + c_5' + c_6' + c_7')\bar\theta^2 = (2 + 2E'_{q,\max})(\Delta t\,\omega_{\max})^2$. This is precisely the proportional-with-offset bound \eqref{eq:gen_error_bound} stated in Theorem~\ref{thm:gen_koopman}. Since the input enters only in component~2 via \eqref{eq:z2dev} and is captured exactly by $B$, we have $\epsilon_B = 0$. Since $C$ in \eqref{eq:C_matrix} extracts exact linear functions of $z$, we have $\epsilon_C = 0$ and $D = 0$. \qed



\bibliographystyle{IEEEtran}
\bibliography{references.bib}

\end{document}